\newtheorem{theorem}{Theorem}
\newtheorem{lemma}[theorem]{Lemma}
\newtheorem{definition}{Definition}
\newtheorem{claim}[theorem]{Claim}
\newtheorem{observation}[theorem]{Observation}
\newtheorem{notation}{Notation}
\newtheorem*{rep@theorem}{\rep@title}
\newcommand{\newreptheorem}[2]{%
\newenvironment{rep#1}[1]{%
 \def\rep@title{#2 \ref{##1}}%
 \begin{rep@theorem}}%
 {\end{rep@theorem}}}
\newcommand {\ignore} [1] {}
\DeclareMathOperator{\supp}{supp}
\newcommand{\R}{\mathbb{R}}
\newcommand{\Y}{\mathcal{Y}}
\newcommand{\eps}{\varepsilon}
\newcommand{\Norm}{\mathcal{N}}
\newcommand{\enet}{{\cal N}_{\frac{1}{2}}}
\newcommand{\fnet}{{\cal N}_{\frac{1}{4}}}
\newcommand{\dotp}[2]{\left\langle #1 , #2 \right\rangle}
\renewcommand{\log}{\lg}
\newenvironment{customlem}[1]
  {\innercustomlem}
  {\endinnercustomlem}
 \author{
 Mikael M\o ller H\o gsgaard \thanks{Computer Science Department. Aarhus University. \texttt{hogsgaard@cs.au.dk}.} \qquad  
 Lior Kamma \thanks{School of Computer Science. Academic College of Tel-Aviv Yaffo. \texttt{liorkm@mta.ac.il}.}\qquad  
 Kasper Green Larsen \thanks{Computer Science Department. Aarhus University. \texttt{larsen@cs.au.dk}.} \qquad  
 Jelani Nelson \thanks{Department of EECS. UC Berkeley. \texttt{minilek@berkeley.edu}. Supported by NSF grant CCF-1951384, ONR grant N00014-18-1-2562, and
 ONR DORECG award N00014-17-1-2127}\and 
 Chris Schwiegelshohn \thanks{Computer Science Department. Aarhus University. \texttt{schwiegelshohn@cs.au.dk}.}}
\title{Sparse Dimensionality Reduction Revisited}
\begin{document}

\date{}
\maketitle

\begin{abstract}
  The sparse Johnson-Lindenstrauss transform is one of the central
  techniques in dimensionality reduction. It supports embedding a set
  of $n$ points in $\mathbb{R}^d$ into $m=O(\varepsilon^{-2} \lg n)$
  dimensions while preserving all pairwise
  distances to within $1 \pm \eps$. Each input point $x$ is embedded
  to $Ax$, where $A$ is an $m \times d$ matrix having $s$ non-zeros
  per column, allowing for an embedding time of $O(s \|x\|_0)$.

  Since the sparsity of $A$ governs the embedding time, much work has
  gone into improving the sparsity $s$. The current state-of-the-art
  by Kane and Nelson (JACM'14)
  shows that $s = O(\eps^{-1} \lg n)$ suffices. This is almost matched
  by a lower bound of $s = \Omega(\eps^{-1} \lg n/\lg(1/\eps))$ by
  Nelson and Nguyen (STOC'13). Previous work thus suggests that we have
  near-optimal embeddings.

  In this work, we revisit sparse embeddings and identify a loophole
  in the lower bound. Concretely, it requires $d \geq n$, which in
  many applications is unrealistic. We exploit this loophole to give a
  sparser embedding when $d = o(n)$, achieving $s = O(\eps^{-1}(\lg
  n/\lg(1/\eps)+\lg^{2/3}n \lg^{1/3} d))$. We also complement our
  analysis by strengthening the lower bound of Nelson and Nguyen to
  hold also when $d \ll
  n$, thereby matching the first term in our new sparsity upper bound. Finally, we also improve the sparsity of the best oblivious
  subspace embeddings for optimal embedding dimensionality.
\end{abstract}

\thispagestyle{empty}
\newpage
\setcounter{page}{1}

\section{Introduction}
{\em Dimensionality reduction} is a central technique for speeding up algorithms for large scale data analysis and reducing memory consumption for storage. 
A Euclidean-distance-preserving dimensionality reduction is, loosely speaking, an embedding of a high-dimensional Euclidean space into a space of low dimension, that approximately preserves the Euclidean distance between every two points. 
One of the cornerstone results is the Johnson-Lindenstrauss transform~\cite{JL84}, stating that every set of $n$ points in $d$-dimensional space can be embedded into only $m=O(\eps^{-2} \lg n)$ dimensions while preserving all pairwise Euclidian distances between points to within a factor $(1\pm \eps)$. The simplest (random) constructions of such dimensionality reducing maps, known as the {\em Distributional Johnson-Lindenstrauss Lemma}, samples a random $m \times d$ matrix $A$ with entries either i.i.d. $\Norm(0,1)$ distributed or as uniform Rademachers ($-1$ or $1$ with probability $1/2$ each). For a set $X \subset \R^d$ of $n$ points, it then holds with probability at least $1-1/n$ that $L = A/\sqrt{m}$ satisfies
\begin{eqnarray}
\label{eq:jl}
  \forall x,y \in X : \|Lx-Ly\|_2^2 \in (1 \pm \eps)\|x-y\|_2^2 \; .
\end{eqnarray}
We say that a matrix $L$ satisfying~\eqref{eq:jl} is an {\em $\eps$-JL matrix for $X$}. It is worth noting that some works require that an $\eps$-JL matrix satisfies~\eqref{eq:jl} without the square on the Euclidian norm. The two definitions are equivalent up to a constant factor scaling in $\eps$ and we work with the former as it simplifies calculations.

While the target dimension of $m = O(\eps^{-2} \lg n)$ is known to be optimal~\cite{JW13, LN17}, even when $d = O(m)$, computing the embedding $Lx$ of a point $x$ using the construction above requires $\Omega(m d) = \Omega(\eps^{-2}d \lg n)$ operations. In some applications, this may constitute the computation bottleneck, hence much work has gone into designing faster embedding algorithms. These works may roughly be categorized by two approaches. (1) Constructions that use structured embedding matrices with fast matrix-vector multiplication algorithms; and (2) constructions using sparse embedding matrices.

A classic example of the former approach is the FastJL transform by Ailon and Chazelle~\cite{AC09}. Their construction embeds a point $x$ by computing the product $PHDx$, where $D$ is a diagonal matrix with random signs on the diagonal, $H$ is a $d \times d$ Hadamard matrix and $P$ is a random sparse matrix where each entry is non-zero only with some small probability. The main idea in that construction is that $HD$ ``spreads'' the mass of the vector $x$ evenly among its coordinates, which allows for a very sparse $m \times d$ embedding matrix $P$. In addition, Hadamard matrix has an $O(d \lg d)$ matrix-vector multiplication algorithm. Analyzing the FastJL transform, and specifically the correct tradeoff between the target dimension and embedding time, has been studied extensively (see e.g. \cite{DGYNT09, KW11, FL20,JPS20}). The state-of-the-art tight analysis by  Fandina, H{\o}gsgaard and Larsen~\cite{FML22} shows that the embedding time can be bounded by $O(d \lg d + \min\{\eps^{-1} d \lg n, m \lg n \cdot \max\{1, \eps \lg n/\lg(1/\eps)\}\})$.

In the latter approach one instead designs embedding matrices with only $s \ll m$ non-zeros per column. Given such a sparse embedding matrix, it is straightforward to embed a point in $O(d s)$ time instead of $O(dm)$, hence minimizing $s$ has been the focus of extensive work. The current sparsest embedding construction is due to Kane and Nelson~\cite{KN14}, achieving a sparsity upper bound of $s = O(\eps^{-1} \lg n)$. Nelson and Nguyen~\cite{NN13} presented a lower bound of $s = \Omega(\eps^{-1} \lg n/\lg(m/\lg n))$ for any sparse $\eps$-JL matrix, almost settling the optimality of the construction by Kane and Nelson. For optimal target dimension $m = \Theta(\eps^{-2} \lg n)$, this simplifies to $s = \Omega(\eps^{-1} \lg n/\lg(1/\eps))$. While $O(d \eps^{-1} \lg n/\lg(1/\eps))$ is often larger than the near $O(d \lg d)$ embedding time achieved by FastJL, sparse embeddings have one significant advantage in that they may also exploit sparsity in the input points. Concretely, the embedding time of a point $x$ is easily seen to be $O(s\|x\|_0 )$, where $\|x\|_0$ is the number of non-zero entries of $x$. In many applications, such as embedding bag-of-words and tf-idf representations of documents, the input points are indeed very sparse compared to the domain size $d$ (one non-zero entry in $x$ per word in the document, where $d$ the number of distinct words in the dictionary).

\paragraph{Large Sets with Few Dimensions.}
While it may seem that there is little room for improvement in the $\lg(1/\eps)$ gap between the upper and lower bounds known for the sparsity of $\varepsilon$-JL matrices, we identify a shortcoming in the lower bound of Nelson and Nguyen~\cite{NN13}. Concretely, the hard instance in their proof is the set $\{e_1,\dots,e_n\}$ of standard unit vectors. However, in many theoretical applications, the original dimension $d$ is significantly smaller than the size $n$ of the vector-set. In these scenarios, this hard instance does not exist, in which case the lower bound degenerates to $s=\Omega(\eps^{-1} \lg d/\lg(m/\lg d))$. Yet, the upper bound analysis by Kane and Nelson is incapable of exploiting that $d \ll n$ and remains $O(\eps^{-1} \lg n)$.

In addition to broadening our theoretical understanding of sparse dimensionality reduction, we also find that $d \ll n$ is a natural practical setting, also when combined with sparse input points. Consider, for instance, the Sentiment140 data set consisting of 1.6M tweets~\cite{GBH09}. Using a bag-of-words or tf-idf representation of the tweets, where all words occuring less than $10$ times in the $1.6M$ tweets have been stripped, results in a data set with $n=1.6 \cdot 10^6$, $d=37,129$ and an average of $12$ words/non-zeros per tweet. These vectors are thus extremely sparse and have $d$ a factor $43$ less than $n$. Similarly, for the Kosarak data set~\cite{BKT18} consisting of an anonymized click-stream from a Hungarian online news portal, there are $n=900,002$ transactions, each consisting of several items. It has a total of $d=41,270$ distinct items and each transaction consists of an average of $8.1$ items. Here we also have an $n$ that is a factor $22$ larger than $d$ and very sparse input points. In general, when considering bag-of-words and tf-idf, one would assume that there is a fixed dictionary size $d$, while the number of data points $n$ may be arbitrarily large, which further motivates distinguishing between $n$ and $d$ in the sparsity bounds.

One may thus hope to give upper bounds with a dependency on $\lg d$ rather than $\lg n$. This is precisely the message of our work. 

\subsection{Main Results}

Our first main result is an improved analysis of the random sparse embedding by Kane and Nelson~\cite{KN14} reducing the $O(\eps^{-1} \lg n)$ upper bound on the sparsity in the case $d\ll n$. Formally we show the following.
\begin{theorem}
\label{th:upperBoundMain}
Let $0 < \eps < \eps_0$ for a constant $\eps_0$.  There is a distribution over $s$-sparse matrices in $\mathbb{R}^{m\times d}$ with $m=O(\eps^{-2} \lg n)$ and
\[
  s = O\left(\frac{1}{\eps} \cdot \left(\frac{\lg n}{\lg(1/\varepsilon)} + \lg^{2/3} n \lg^{1/3} d \right)\right)\;,
\]
such that for any set of $n$ vectors $X \subset \R^d$, it holds with probability at least $1-O(1/d)$ that a sampled matrix is an $\eps$-JL matrix for $X$.
\end{theorem}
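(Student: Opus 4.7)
The plan is to analyze the Kane--Nelson~\cite{KN14} block-sparse construction with a refined argument that decomposes every normalized difference vector into a heavy (sparse) and a light (flat) component, to be handled by different concentration tools. Take $A$ to be the block-sparse matrix of~\cite{KN14}: partition the $m$ rows into $s$ equal blocks, and in each block of each column place an independent $\pm 1/\sqrt{s}$ at a uniformly random row. By homogeneity and a union bound over the $\leq n^2$ difference vectors, the theorem reduces to showing, for every fixed unit $u \in \R^d$, that $\Pr[\,|\|Au\|_2^2 - 1| > \eps\,] = O(1/(d n^2))$. Fix a threshold $\tau \in (0,1]$ to be optimized, partition the coordinates as $H_\tau = \{i : |u_i| > \tau\}$ and $L_\tau = [d]\setminus H_\tau$, and write $u = u_H + u_L$. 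Since $\|u\|_2 = 1$, $|H_\tau| \leq \tau^{-2}$. Disjointness of the supports yields
\[
\|Au\|_2^2 - 1 \;=\; (\|Au_H\|_2^2 - \|u_H\|_2^2) \;+\; (\|Au_L\|_2^2 - \|u_L\|_2^2) \;+\; 2\,\langle Au_H, Au_L\rangle,
\]
and it suffices to bound each of the three terms by $\eps/3$ with the required probability.

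For the light-light and bilinear cross terms, $\|u_L\|_\infty \leq \tau$ and $\|u_H\|_\infty \leq 1$. An $\ell$-th moment bound in the spirit of Kane--Nelson, tailored to expose the $\ell_\infty$ norms of its inputs, controls both terms. Taking $\ell = \Theta(\log n + \log d)$ and applying Markov, the two terms stay below $\eps/3$ with probability $O(1/(dn^2))$ provided
\[
s \;=\; \Omega\!\left(\eps^{-1}\!\left(\log n/\log(1/\eps) \,+\, \tau \log n\right)\right).
\]
The first summand is the dimension-free Kane--Nelson sparsity, matching the Nelson--Nguyen~\cite{NN13} lower bound in its regime; the second is the new saving afforded by $\|u_L\|_\infty \leq \tau$.

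For the heavy-heavy term, the vector $u_H$ lies in the union
\[
\mathcal{U} \;=\; \bigcup_{\substack{S \subseteq [d] \\ |S|\leq \tau^{-2}}} \mathrm{span}\{e_i : i \in S\} \,\cap\, B_2^d,
\]
a union of $\binom{d}{\tau^{-2}} \leq d^{\tau^{-2}}$ coordinate subspaces of dimension $\leq \tau^{-2}$. A standard $\Theta(\eps)$-net on each unit sphere has size $(O(1/\eps))^{\tau^{-2}}$, so $\mathcal{U}$ is covered by a net $\mathcal{N}$ of log-size $O(\tau^{-2}(\log d + \log(1/\eps)))$. Applying a JL moment bound to $\mathcal{N}$ and extending from net to sphere via a standard Neumann-series / approximation-chaining rounding, the heavy-heavy term is bounded by $\eps/3$ with probability $O(1/d^2)$ provided $s = \Omega(\eps^{-1}\tau^{-2}\log d)$, absorbing lower-order factors in $\log(1/\eps)$ and $\log\tau$. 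Summing the requirements gives $s = \Omega(\eps^{-1}(\log n/\log(1/\eps) + \tau\log n + \tau^{-2}\log d))$, and setting $\tau = (\log d/\log n)^{1/3}$ balances the two $\tau$-dependent terms at $\log^{2/3}n\cdot\log^{1/3}d$, yielding the claimed sparsity.

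The main obstacle is the refined moment bound for the light-light and cross terms: one must execute an $\ell$-th moment calculation for the block-sparse matrix on a vector with a controlled $\ell_\infty$ norm, while simultaneously treating the bilinear $\langle Au_H, Au_L\rangle$ so that only a linear (not polynomially worse) dependence on $\tau$ appears. This amounts to refining the graph enumeration behind the Kane--Nelson bound to separately track the contributions from $\|u_L\|_\infty$ and $\|u_L\|_2$ while coupling them to contributions from $u_H$; the subsequent net construction, rounding, and balancing are comparatively standard.
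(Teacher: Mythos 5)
Your overall architecture matches the paper's: the same head/light split (the paper takes the top $\ell$ entries with $\ell\approx\tau^{-2}=(\lg n/\lg d)^{2/3}$, you threshold at $\tau$, which is equivalent here), the same $d^{O(\tau^{-2})}$-sized net over coordinate subspaces for the heavy part, an $\ell_\infty$-aware bound for the light part (the paper invokes Jagadeesan's theorem for exactly this), and the same balancing of $\tau^{-2}\lg d$ against $\tau\lg n$. The three sparsity requirements you write down are the correct ones.

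The gap is the cross term $\langle Au_H, Au_L\rangle$, which you correctly identify as the main obstacle but then only assert can be handled by a refined Kane--Nelson moment/graph-enumeration bound yielding $s=\Omega(\eps^{-1}\tau\lg n)$. No such bound is derived, and this is precisely where the paper says existing tools fail: $u_H$ has entries as large as $1$, so its $\ell_\infty/\ell_2$ ratio is unbounded, and the moment machinery behind Kane--Nelson and Jagadeesan cannot absorb the bilinear form within an $n^{O(1)}$ union-bound budget. The paper's actual route is not a moment computation. It first proves a structural claim: with probability $1-n^{-3}$ the support of $u_H$ is \emph{well-behaved}, i.e.\ at most $O(\lg n/\lg(1/\eps))$ rows contain six or more of the non-zeros from those columns, so all but that many entries of $Au_H$ are at most $\sqrt{5/s}$. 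It then conditions on the realization $u=Au_H$ (legitimate since $u_H$ and $u_L$ have disjoint supports, so $Au_L$ depends on disjoint columns) and splits the index pairs $(i,j)$ of the bilinear form into those where both $|u_i|$ and $|x_j|$ are large --- few such pairs, handled by a Chernoff bound on the number of non-zero $a_{ij}$ among them --- and the rest, where the coefficients are $O(\eps/\lg n)$ and Hoeffding over the signs applies after a Chernoff bound on the conditional variance $\|w\|^2$. This argument also forces the cap $\tau^{-2}\le\eps^{-1/2}$ on the head size (the paper's $\ell=\lceil\min\{\eps^{-1/2},(\lg n/\lg d)^{2/3}\}\rceil$), which your proposal omits and which is needed for the well-behaved probability bound to go through. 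Without executing either this conditioning argument or the moment refinement you allude to, the proof is incomplete at its central step.
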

While the first term may resemble the lower bound presented by Nelson and Nguyen~\cite{NN13}, their lower bound did not apply when the size of $X$ is significantly larger than the dimension $d$, and thus cannot consist of just the standard basis for $\mathbb{R}^d$. 

Our second result complements the upper bound in Theorem~\ref{th:upperBoundMain} with a tight lower bound on the sparsity of $\varepsilon$-JL matrices. We show that if $m$ is sufficiently smaller than $d$, then every $\varepsilon$-JL matrix embedding $d$-dimensional vectors in $\mathbb{R}^m$ must have relatively dense columns. Formally we show the following.
\begin{theorem}
  \label{thm:lower}
  Let $0 < \eps < 1/4$, and let $m$ be such that $m = \Omega(\eps^{-2} \lg n)$ and $m \leq (\eps d/\lg n)^{1-o(1)}$. Then there is a set of $n$ vectors $X \subset \R^d$ such that any $\eps$-JL matrix $A$ embedding $X$ into $\R^m$, must have a column with sparsity $s$ satisfying 
	\[
	s = \Omega\left(\frac{\lg n}{\varepsilon \lg(m/\lg n)}\right) \;.
	\]
For optimal $m = \Theta(\eps^{-2} \lg n)$, this simplifies to $s = \Omega(\eps^{-1} \lg n/\lg(1/\eps))$.
\end{theorem}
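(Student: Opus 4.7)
My plan is to reduce the lower bound to the Nelson--Nguyen (NN) sparsity lower bound for the standard basis hard instance in $\R^n$. The key idea is to build a hard set in $\R^d$ by ``compressing'' the standard basis of $\R^n$ via a sparse auxiliary matrix $B \in \R^{d \times n}$, so that a sparse $\eps$-JL matrix $A$ for the compressed set yields an $O(\eps)$-JL matrix $AB$ for the NN instance.

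First I would construct $B$ so that each of its columns is a $k$-sparse unit vector with $\pm 1/\sqrt{k}$ entries, where $k$ is taken as small as possible. A probabilistic argument (random supports of size $k$, Rademacher signs, union bound over the $\binom{n}{2}$ pairs) gives $|\langle Be_i,Be_j\rangle| = O(\eps)$ simultaneously for all $i \neq j$ with high probability. The hypothesis $m \leq (\eps d/\log n)^{1-o(1)}$ combined with $m = \Omega(\eps^{-2}\log n)$ forces $d \geq m^{1+o(1)}\log n/\eps$, giving enough room in $\R^d$ to fit $n$ such nearly-orthonormal vectors. I would then set the hard instance $X = \{Be_1,\dots,Be_n\}$.

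Given any $\eps$-JL matrix $A$ for $X$, the polarization identity immediately yields $\langle A(Be_i),A(Be_j)\rangle = \langle Be_i,Be_j\rangle \pm O(\eps) = \pm O(\eps)$ and $\|A(Be_i)\|^2 = 1 \pm O(\eps)$, so $AB$ is an $O(\eps)$-JL matrix for the standard basis of $\R^n$. Applying the NN lower bound to $AB : \R^n \to \R^m$ then forces at least one column of $AB$ to have sparsity $\Omega(\log n/(\eps \log(m/\log n)))$. Since each column of $AB$ is a linear combination of at most $k$ columns of $A$ (indexed by the nonzeros of the corresponding column of $B$), the column sparsity of $AB$ is at most $ks$, yielding $s = \Omega(\log n/(k\eps \log(m/\log n)))$.

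The main obstacle is achieving $k = O(1)$, which would make the reduction lossless and deliver the claimed bound. Random constant-sparse vectors suffice whenever $n \leq d^{O(1)}$, but in the harder regime $n \gg d^{O(1)}$ (still allowed by the hypothesis since it only forces $\log n \leq d^{1-o(1)}$) a purely random $k$ must grow like $\Omega(\log n/\log d)$, and the final bound naively degrades to $\Omega(\log d/(\eps \log(m/\log n)))$. Overcoming this likely requires either a structured construction (e.g.\ from a combinatorial design or code) of nearly-orthonormal $O(1)$-sparse vectors in $\R^d$, or a refinement of the reduction that absorbs the factor $k$ inside the logarithm that appears in the NN counting argument itself, rather than losing it as a multiplicative factor on the outside. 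I expect this to be the core technical challenge of the proof.
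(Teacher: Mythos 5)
Your reduction is structurally sound as far as it goes, but it cannot deliver the stated bound, and the obstacle you flag at the end is not a technical nuisance that a cleverer construction of $B$ will remove --- it is information-theoretic. To have $n$ distinct $k$-sparse sign vectors in $\R^d$ you already need $\binom{d}{k}2^k \geq n$, hence $k = \Omega(\lg n/\lg d)$; and even ignoring counting, two $k$-sparse $\pm 1/\sqrt{k}$ vectors whose supports intersect in exactly one coordinate have inner product exactly $1/k$ in magnitude, so requiring $|\langle Be_i,Be_j\rangle| = O(\eps)$ for all pairs forces either $k = \Omega(1/\eps)$ or essentially disjoint supports, which again runs into the counting bound. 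With $k = \Theta(\lg n/\lg d)$ your final inequality $s = \Omega(\lg n/(k\eps\lg(m/\lg n)))$ collapses to $s = \Omega(\lg d/(\eps\lg(m/\lg n)))$, which is exactly the degenerate bound one already gets by applying Nelson--Nguyen to $\{e_1,\dots,e_d\}$ and is what the theorem is trying to beat. So the black-box reduction provably loses the very factor the theorem needs to gain.

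The paper's proof is, in effect, the ``refinement that absorbs the factor $k$'' you speculate about, but it is not a reduction: the hard instance is essentially the set of your $B$-columns (all normalized sums $v_S = \sum_{i\in S}e_i/\sqrt{\ell}$ over supports $S$ of size $\ell \approx \lg n/\lg d$, together with $e_1,\dots,e_d$ and $0$), and the Nelson--Nguyen signature argument is rerun directly on the columns of $A$ rather than on $AB$. One finds $t \approx \lg(\eps d/\ell)/\lg(m/s)$ coordinates $T$ and a $1/4$-net point $w$ such that at least $\ell/\eps$ columns of $A$ place an $\Omega(t/s)$ fraction of their squared norm on $T$, pointing nearly in the direction $w$; these columns are grouped into $\eps^{-1}$ disjoint blocks of $\ell$, each block summing to some $Av_{S_i}$ with $v_{S_i} \in X$. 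Because all $\ell$ columns in each block are correlated on $T$, zeroing out the coordinates in $T$ decreases each cross inner product $\langle Av_{S_i},Av_{S_j}\rangle$ by $\Omega(\ell t/s)$ rather than $\Omega(t/s)$ --- the factor $\ell$ is \emph{gained} at this step, not lost --- and nonnegativity of $\|\sum_i \hat A v_{S_i}\|^2$ then yields $s = \Omega(\eps^{-1}\ell t) = \Omega(\eps^{-1}\lg n/\lg(m/s))$. To salvage your approach you would have to open up the Nelson--Nguyen argument in exactly this way; as written, the proposal has a genuine gap at its decisive step.
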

Recall the comparable lower bound in \cite{NN13} was specifically for the case $n=d$.
Combined with the refined upper bound, we now have a completely tight understanding of sparse dimensionality reduction when $\lg n \geq \lg d \cdot \lg^3(1/\eps)$. 
While arguably being small asymptotic improvements, these are the first improvements in a decade and demonstrate that the dimension of the input data may be exploited to speed up embeddings.

\paragraph{Subspace Embeddings.}
Given a $k$-dimensional subspace $V\subset \R^d$, an {\it $\eps$-subspace embedding} \cite{Sarlos06} is a matrix $A\in\R^{m\times d}$ satisfying that for all $x\in V$, $\|Ax\|_2^2 \in (1\pm \eps)\|x\|_2^2$. It is known that there exists a subset $V'\subset V$ of size $O(1)^k$ such that if $A$ preserves the $\ell_2$ norm of every vector in $V'$ up to $(1+\eps/2)$, then $A$ is an $\eps$-subspace embedding \cite{AroraHK06}. The JL lemma thus implies that one can take $m = O(k/\eps^2)$, and in fact this is optimal in the case that $A$ is drawn from a fixed distribution over $\R^{m\times d}$ that is independent of $V$ \cite{NelsonN14} (a so-called {\it oblivious subspace embedding (OSE)}). OSE's can be used to speed up algorithms for approximate regression, low rank approximation, and a large number of other problems in numerical linear algebra; see the monograph by Woodruff \cite{Woodruff14}.

As a simple example, consider the problem of approximate linear regression in which one wants to find a $\tilde\beta$ which approximately minimizes $\|X\beta - y\|_2^2$ for some given $X\in\R^{n\times d}$. This problem can be solved exactly in $O(nd^2)$ time by writing the Singular Value Decomposition $X = U\Sigma V^\top$ then setting $\beta_{\text{LS}} :=  V \Sigma^{-1}U^\top y$. Then $X \beta_{\text{LS}} = UU^\top y$ is the projection of $y$ onto the column space of $X$, which minimizes the error. The {\it sketch-and-solve} paradigm \cite{Sarlos06}, in one analysis, suggests taking $A$ to be a subspace embedding for $\mathop{span}\{y, \text{cols}(X)\}$ (which has dimension at most $d+1$) then setting $\tilde \beta$ to be the minimizer of $\|AX\beta - Ay\|_2^2$. Note $AX$ is now a much smaller matrix, so one can compute $\tilde\beta$ more quickly. However, we also need $A$ to either be sparse or structured, so that $AX$ can be computed quickly. Otherwise, if $A$ is an arbitrary unstructured matrix, computing $AX$ would take more time than computing $\beta_{\text{LS}}$ exactly!

Note that if each column of $A$ has $s$ nonzero entries, then $AX$ can be computed in time $O(s\|X\|_0)$, where $\|X\|_0$ is the number of nonzero entries in $X$. Simply using the SparseJL transform \cite{KN14} would lead to $m = O(k/\eps^2)$, $s = O(k/\eps)$. Clarkson and Woodruff \cite{ClarksonW13} showed that $m = O(k^2/\eps^2)$, $s = O(1)$ is achievable, which for OSE's is optimal \cite{NelsonN14,LiL22}. What though if we do not want to increase $m$ at all beyond the optimal bound of $O(k/\eps^2)$? What is the best sparsity $s$ achievable without sacrificing the asymptotic quality of dimensionality reduction? Nelson and Nguyen showed $m = O((k/\eps^2)\cdot \mathop{poly}(\eps^{-1}\log k))$ is achievable with $s = \mathop{poly}(\log(k/\eps))/\eps$ \cite{NelsonN13}, and conjectured that $s = O((\log k)/\eps)$ suffices with $m = O(k/\eps^2)$. Cohen provided an improved bound, showing $m = O((k\log k)/\eps^2)$, $s = O((\log k)/\eps)$ suffices \cite{Cohen16}, which remains the best known bound today. In particular, for $m = O(k/\eps^2)$, despite the conjecture of \cite{NelsonN13}, no sparsity bound better than $s = O(k/\eps)$ is known, which follows from black box application of SparseJL. In this work, we provide the first proof that keeps $m = O(k/\eps^2)$ while showing a sparsity bound that is $o(k/\eps)$. Specifically, we achieve $s = O(k/(\eps\log(1/\eps)) + \sqrt[3]{k^2\log k}/\eps)$. 
Formally we show the following.
 \begin{theorem}
 \label{th:subspacesMain}
 Let $0 < \eps < 1$. There is a distribution over $s$-sparse matrices in $\mathbb{R}^{m\times d}$ with $m=O(\eps^{-2} k)$ and
 \[
   s = O\left(\frac{1}{\eps} \cdot \left(\frac{k}{\lg(1/\varepsilon)} + k^{2/3} \lg^{1/3}k\right)\right)\;,
 \]
  such that for any $k$-dimensional subspace $V \subseteq \R^d$, it holds with probability at least $1-2^{-k^{2/3}}$ that a sampled matrix is an $\eps$-JL matrix for $V$.
 \end{theorem}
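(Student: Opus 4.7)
The plan is to reduce the subspace embedding property to a JL-matrix property on a finite net of $V$, and then invoke Theorem~\ref{th:upperBoundMain}, exploiting the fact that the net lies in a $k$-dimensional subspace.

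\textbf{Net reduction.} By the covering result of Arora, Hazan and Kale~\cite{AroraHK06} cited in the introduction, there is a set $N \subseteq V$ of size $|N| \leq c^{k}$ for an absolute constant $c$, such that any matrix preserving the $\ell_2$-norm of every $x \in N$ to within a factor $(1 \pm \eps/2)$ is automatically an $\eps$-subspace embedding for $V$. It therefore suffices to produce a sparse $A$ which acts as an $(\eps/2)$-JL matrix on $N$; in particular, $\lg|N|=O(k)$.

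\textbf{Invoking the analysis of Theorem~\ref{th:upperBoundMain}.} Sample $A$ from the Kane--Nelson sparse distribution with target dimension $m=\Theta(\eps^{-2}k)$ and sparsity $s$ to be fixed. Plugging $n=|N|=2^{O(k)}$ into Theorem~\ref{th:upperBoundMain} as a black box would yield sparsity $O(\eps^{-1}(k/\lg(1/\eps)+k^{2/3}\lg^{1/3}d))$ and failure probability only $O(1/d)$. This is insufficient in two respects: the dependence is on the ambient $d$ rather than the intrinsic $k$, and the failure probability is far too weak.

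\textbf{Exploiting the subspace structure.} Fix an orthonormal basis $U\in\R^{d\times k}$ for $V$ and write each $x\in N$ as $x=Uz$ with $z\in\R^{k}$ and $\|z\|_2=\|x\|_2$. The error $\|Ax\|_2^{2}-\|x\|_2^{2}$ is then a random polynomial in the entries of $A$ whose dependence on the input reduces to the $k$ coordinates of $z$. Re-running the moment-method calculation that underlies Theorem~\ref{th:upperBoundMain} through this intrinsic $k$-dimensional view allows the combinatorial factors that produced $\lg d$ to be recast in terms of $k$, yielding a bound of the form $O(\eps^{-1}(k/\lg(1/\eps)+k^{2/3}\lg^{1/3}k))$. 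Choosing the moment order in the analysis to match the per-net-element concentration with the net size gives overall failure probability $2^{-k^{2/3}}$.

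\textbf{Main obstacle.} The principal technical difficulty lies in step three: one must open the proof of Theorem~\ref{th:upperBoundMain}, pinpoint where the ambient dimension $d$ enters the moment bounds, and verify that each such step goes through with $k$ in place of $d$ after change of basis to $V$. A related subtlety is controlling the overall failure probability: a naive union bound over a net of size $2^{O(k)}$ would demand per-point failure $2^{-\Omega(k)}$, which would inflate the sparsity; one therefore likely wants to combine the intrinsic moment bound with a chaining-type argument on the unit sphere of $V$ (or verify that the moment calculation already controls all of $V$ jointly), rather than pay for a crude union bound over $N$.
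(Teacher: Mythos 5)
Your overall skeleton (reduce to a finite net of size $2^{O(k)}$ in $V$, then refine the analysis of Theorem~\ref{th:upperBoundMain} to replace $\lg d$ by $\lg k$) matches the paper, but the heart of your step three is not a proof, and the specific mechanism you propose would not work. The $\lg d$ in Theorem~\ref{th:upperBoundMain} does not arise from a moment computation that can be ``recast'' after a change of basis: it comes from a union bound over the $\binom{d}{\ell}$ possible supports of the heads, together with a fine net on each support. Writing $x=Uz$ does not help, because the sparse matrix $A$ interacts with the support of $x$ in the \emph{ambient} coordinates (which rows of which columns are hit), and $AU$ is not itself a Kane--Nelson matrix, so the concentration machinery cannot be re-run ``intrinsically in $\R^k$''. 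The idea you are missing is the paper's Lemma~\ref{lem:subspacecover}: for a $k$-dimensional $V$, all coordinates on which some unit vector of $V$ has magnitude at least $1/\sqrt{\ell}$ lie in a single set $S$ with $|S|\le k\ell$. Redefining the head as the set of entries of magnitude exceeding $1/\sqrt{\ell}$ (rather than the top $\ell$ entries) confines every head's support to $S$, so the union bound for the heads is over only $\binom{k\ell}{\ell}\cdot 2^{O(\ell)}=2^{O(\ell\lg k)}$ canonical heavy vectors (a $1/2$-net on each $\ell$-dimensional slice suffices, by the same net lemma), which is what actually eliminates $d$.

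Your concern about the union bound over the $2^{O(k)}$-point net is also resolved differently than you suggest: no chaining is needed. For the light parts, per-point failure probability $2^{-\Omega(k)}$ is affordable at sparsity $O(\eps^{-1}k/\lg(1/\eps))$ precisely because $\|x_{light(\ell)}\|_\infty\le 1/\sqrt{\ell}$ and Jagadeesan's bound (Theorem~\ref{thm:meenah}) applies; for the heavy parts the union bound is over the much smaller set above, needing only per-point failure $2^{-O(\ell\lg k)}$ and hence $s=O(\eps^{-1}\ell\lg k)$; and the cross terms are handled by the bespoke concentration argument of Claim~\ref{c:cross}. Balancing $s=\Omega(\eps^{-1}\ell\lg k)$ against $s=\Omega(\eps^{-1}k/\sqrt{\ell})$ at $\ell=(k/\lg k)^{2/3}$ then yields the stated sparsity. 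As written, your proposal correctly names the obstacle but does not supply the structural lemma needed to overcome it.
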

While this is far from the conjectured optimal bound of $O((\log k)/\eps)$, it provides the first analysis that maintains optimal $m$ while providing sparsity $s$ strictly better than applying SparseJL as a black box.



\section{Technical Overview}
In this section, we present the central ideas employed in our new contributions. We first survey our improved upper bound analysis, then the main ideas in our lower bound, and finally the new subspace embedding results. For ease of notation, we henceforth write $\|x\|$ to denote $\|x\|_2$.

\paragraph{Sparser Dimensionality Reduction.}
One method for achieving Sparse JL matrices presented by Kane and Nelson \cite{KN14} is based on the \textsf{CountSketch} algorithm \cite{CharikarCF04}. An embedding matrix $A$ is sampled by partitioning the $m$ rows into $s$ groups of $m/s$ entries each. In every column of $A$ a uniform random entry in each group is sampled and set uniformly to either $1/\sqrt{s}$ or $-1/\sqrt{s}$. All other entries are set to $0$. Kane and Nelson then showed that if $s = \Omega(\eps^{-1} \lg(1/\delta))$ then for every unit vector $x$, it holds that $\|Ax\|^2 \in 1 \pm \eps$ with probability at least $1-\delta$. Setting $\delta = n^{-3}$, using linearity of $A$ and a union bound over $z = (y-x)/\|y-x\|$ for all $x,y$ in an input set of points/vectors $X$ completes their proof. Hereafter we focus on showing that $A$ preserves the norm of every vector in a set $X$ of $n^2$ unit vectors with good probability.
Kane and Nelson also included a short argument showing that their analysis is tight for distances between the standard unit vectors $e_1,\dots,e_d$. 

However, our key observation is that, if $d \ll n$, then a naive union bound over all $n^2$ pairs of vectors in $X$ may be too loose. Concretely, there are much fewer than $n^2$ vectors that are of this worst case form. In particular, when $d \ll n$, then most vectors in a set $X$ of cardinality $n$ must have many entries that are small in magnitude. It is already known from work on Feature Hashing~\cite{WKD+09,DKT17,FKL18,J19} and the FastJL transform~\cite{AC09, FML22} that vectors $x$ with a small $\|x\|_\infty$ to $\|x\|$ ratio are easier to embed than worst case vectors. For instance, for optimal $m = \Theta(\eps^{-2} \lg n)$, Jagadeesan~\cite{J19} showed that as long as $s = \Omega(\eps^{-1} \lg n/\lg(1/\eps))$  and the ratio $\nu = \|x\|_\infty/\|x\|$ satisfies $\nu \leq \sqrt{\eps s /\lg n}$,
then SparseJL preserves the norm of $x$ to within $1 \pm \eps$ with probability at least $1-1/n^3$.

In order to exploit a small dimension $d$, we split every vector $x \in X$ into two support-disjoint vectors, referred to as a \emph{head} and a \emph{tail}, where the head contains the top $\ell$ entries of $x$ and the tail contains the remaining entries. That is, we write $x = x_{head} + x_{tail}$. Then \[\|Ax\|^2 = \|Ax_{head}\|^2 + \|Ax_{tail}\|^2 + 2\langle Ax_{head}, Ax_{tail} \rangle \;.\] 
We now treat these three terms separately. Showing that with high probability, $\|Ax_{head}\|^2 \in (1 \pm \varepsilon)\|x_{head}\|^2$, $\|Ax_{tail}\|^2 \in (1 \pm \varepsilon)\|x_{tail}\|^2$ and $|\langle Ax_{head}, Ax_{tail} \rangle| \le \varepsilon$ (since $\langle x_{head}, x_{tail}\rangle =0$). The technical crux lies in bounding the cross terms.

In order to bound the heads, the main observation is that there are about $\binom{d}{\ell} \leq d^\ell$ choices for the positions of the heads. Once the positions have been chosen, we further approximate the heads by an $\eps$-net of cardinality $(1/\eps)^{O(\ell)}$.
Since $d \geq m = \Omega(\eps^{-2} \lg n)$, the total number of heads we need to consider is $d^\ell (1/\eps)^{O(\ell)} = d^{O(\ell)}$. Using the analysis by Kane and Nelson with $\delta = d^{-O(\ell)}$ shows that it suffices with $s = \Omega(\eps^{-1} \ell \lg d)$ to get the required bound with high probability.

As for the tails, there are at most $n^2$ distinct tails and they have $\|x_{tail}\|_\infty \leq 1/\sqrt{\ell} \le \|x_{tail}\|_2 /\sqrt{\ell} $. We can thus use the result by Jagadeesan to show that $\|Ax_{tail}\|^2 \in (1 \pm \eps)\|x_{tail}\|^2$ whenever $s$ satisfies both $s = \Omega(\eps^{-1} \lg n/\lg(1/\eps))$ and $(1/\sqrt{\ell}) \leq \sqrt{\eps s /\lg n}$, which is implied by $s = \Omega(\eps^{-1} \lg(n)/\ell)$.

The main challenge lies in bounding the cross terms showing $|\langle Ax_{head}, Ax_{tail} \rangle|  \le \eps$. 
Previous results, and specifically the aforementioned results by Kane and Nelson \cite{KN14} and Jagadeesan \cite{J19} cannot be employed, as on one hand the number of pairs is very large, and more specifically depends polynomially on $n$, and on the other hand the $\ell_\infty/\ell_2$ ratio of the corresponding vectors cannot be upper bounded as the heads have heavy entries. In order to bound the cross terms we present new concentration bounds on the CountSketch-based construction by Kane and Nelson. We first show that for optimal dimension $m = O(\varepsilon^{-2}\lg n)$, for sparsity $s \le \varepsilon m$ and $\ell \le \varepsilon^{-1/2}$ we get that with high probability for every $x \in X$, there are only few rows in $A$ where more than $5$ non-zero entries coincide with the support of $x_{head}$. In turn, this means that most entries of $Ax_{head}$ are not too large, and specifically do not exceed $\sqrt{5/s}$. We then turn to analyze the probability that for some $x \in X$, 
\[
\langle Ax_{head}, Ax_{tail} \rangle = \sum_{i \in [m]}(Ax_{head})_i(Ax_{tail})_i = \sum_{i \in [m]}\left((Ax_{head})_i\sum_{j \in \supp(x_{tail})}a_{ij}x_j\right)
\]
is at most $\varepsilon$. To this end, we partition the sum into two sums, where the first sum handles terms $(i,j)$ where $(Ax_{head})_i$ and $x_j$ are large and the second sum handles the remaining terms. Here we exploit that $(Ax_{head})_i$ is small for most $i$ as just argued. Furthermore, since $x$ is unit length, there are also few choices of $j$ where $x_j$ is large. The first sum can thus be handled by exploiting that there are few terms in the sum, and the second sum has strong concentration since the terms are small.

\paragraph{Stronger Lower Bound.}
To improve over the lower bound given by Nelson and Nguyen~\cite{NN13}, we first need to define a harder input instance. Concretely, they used the standard unit vectors $e_1,\dots,e_n$, which as argued earlier, only is a valid input for $d \geq n$.

Our hard instance $X$ instead consists of all vectors of the form $v_S=\sum_{i \in S} e_i/\sqrt{|S|}$ for subsets $S \subseteq [d]$ of cardinality $\lg n/\lg d$, all the standard unit vectors $e_1,\dots,e_d$, as well as the origin $0$.

Now consider an $m \times d$ embedding matrix $A$, such that each column of $A$ has at most $s$ non-zeros, and $A$ is an $\varepsilon$-JL matrix for $X$. Since $e_1,\dots,e_d$ and $0$ are in the input, it must be the case that each column $a_j$ of $A$ has norm in $1 \pm \eps \leq 2$. Now assume for simplicity that all the columns of $A$ had precisely $s$ non-zero entries and those took values $\{-1/\sqrt{s},1/\sqrt{s}\}$. For a subset $T \subseteq [m]$ of $t$ entries and a list of $t$ signs $\sigma=(\sigma_1,\dots,\sigma_t)$, we say that $a_j$ has the signature $(T,\sigma)$ if $a_j$ is non-zero in every coordinate corresponding to $T$ and its coordinates inside $T$ have the signs $\sigma$. Any column would then have $\binom{s}{t}$ distinct signatures. Since there are $\binom{m}{t}2^t$ signatures and $d$ columns, it follows by averaging that there must be a signature shared by at least $d \binom{s}{t}/\binom{m}{t}2^t \approx d (s/m)^t$ columns. We set $t$ roughly as $c \lg d/\lg(m/s)$ for a small constant $c>0$, resulting in at least $\textrm{poly}(d)$ columns sharing the same signature.

We now fix such a signature and let $S$ be the subset of columns in $A$ with that signature. If $|S| = \textrm{poly}(d) \geq \eps^{-1} \lg n/\lg d$, then we can select $\eps^{-1}$ disjoint subsets $S_1,\dots,S_{\eps^{-1}}$ of $S$, each of cardinality $\lg n/\lg d$. For each such subset $S_i$, we know that the vector $v_{S_i}$ is in $X$. Now inside the coordinates in $T$, all columns in $S_i$ are non-zero and have the same sign. Hence the entries of $Av_{S_i}$ inside $T$ are $\sqrt{\lg n/(s \lg d)}$ in magnitude as the columns add up. Moreover, the entries inside $T$ also have the same signs across distinct $Av_{S_i}$ and $Av_{S_j}$.

If we now delete the entries in $T$ from all such $Av_{P_i}$, we are left with vectors whose norm is no more than $1 + \eps < 2$. Moreover, since $v_{S_i}$ and $v_{S_j}$ have disjoint supports, they were orthogonal before embedding and thus to preserve their distance, the inner products of $Av_{S_i}$ and $Av_{S_j}$ must be $O(\eps)$. Deleting the entries in $T$ reduces these inner products by $|T| \lg n/(s \lg d) = t \lg n/(s \lg d) \approx \lg n/(s \lg(m/s))$. If we call the resulting vectors $\tilde{A}v_{S_i}$, then it must hold that $0 \leq \|\sum_{i=1}^{\eps^{-1}} \tilde{A}v_{S_i} \|^2 = \sum_{i=1}^{\eps^{-1}} \|\tilde{A}v_{S_i}\|^2 + \sum_i \sum_{j \neq i} \langle \tilde{A}v_{S_i} , \tilde{A}v_{S_j} \rangle \leq 2\eps^{-1} + \eps^{-1} (\eps^{-1} -1) (O(\eps) - \lg n/(s \lg(m/s)))$. Multiplying by $\eps$ and solving for $s$ gives $s = \Omega(\eps^{-1} \lg n/\lg(m/s))$. Since $m = \Omega(\eps^{-2} \lg n)$, this is equivalent to $s = \Omega(\eps^{-1} \lg n/\lg(m/\lg n))$.

To deal with columns of $A$ that are not of the form $\{-1/\sqrt{s},0,1/\sqrt{s}\}$ we redefine signatures to be subsets of coordinates where $a_j$ has large norm restricted to those coordinates. Also, instead of the signs $\sigma$, we instead build a $1/4$-net over the $T$ coordinates and let the closest net point be a substitute for the signs. 


Comparing our argument to that of Nelson and Nguyen~\cite{NN13}, the key difference lies in summing up multiple columns of $A$ that all share the same signature. To ensure this sum of columns corresponds to a vector in $X$, we add every sum of $\lg n/\lg d$ columns $v_S$ to the input.

\paragraph{Subspace Embeddings.}
For subspace embeddings, we note that the classic approach for showing a sparsity of $s = O(\eps^{-1} k)$ follows by constructing a $1/2$-net $\enet \subset V$ over the $k$-dimensional subspace $V$. One can then (roughly) show that if a linear embedding matrix $A$ preserves the norm of all net points, then it preserves the pairwise distance between all points in $V$. Since such a net has cardinality $2^{O(k)}$, the claimed sparsity follows from Kane and Nelson's $s = O(\eps^{-1} \lg(1/\delta))$ with $\delta = 2^{-O(k)}$.

A first attempt at improving over this would be to directly insert $n=2^{O(k)}$ into our improved sparse embedding from above. This would result in a sparsity of $s = O(\eps^{-1}(k/\lg(1/\eps) + k^{2/3} \lg^{1/3} d))$. The first term is fine, but the latter term depends on $d$, which would make the bound incomparable to previous results that only depend on $k$ and $\eps$. We thus take a closer look at the origin of the dependency on $d$.

Recall that for a set of $n$ vectors, such as the net $\enet$, we partition the vectors $w \in \enet$ into a head and a tail as $w = w_{head} + w_{tail}$ where $w_{head}$ contains the largest $\ell$ entries of $w$. We then observed that for an embedding matrix $A$, we have
\[
  \|Aw\|^2 = \|Aw_{head}\|^2 + \|Aw_{tail}\|^2 + 2 \langle Aw_{head}, Aw_{tail} \rangle.
\]
We then showed that  $\|Aw_{head}\|^2 \in (1 \pm O(\eps))\|w_{head}\|^2$, $\|Aw_{tail}\|^2 \in (1 \pm O(\eps))\|w_{tail}\|^2$ and $|\langle Aw_{head}, Aw_{tail} \rangle| = O(\eps)$. For the second term, we exploited that $\|w_{tail}\|_\infty \leq 1/\sqrt{\ell}$ and then combined this with the result by Jagadeesan for embedding vectors with a small $\| \cdot \|_\infty$. The requirement on $s$ resulting from this term was $s = \Omega(\eps^{-1} \lg n/\lg(1/\eps)) = \Omega(\eps^{-1} k /\lg(1/\eps))$ as well as $s = \Omega(\eps^{-1} \lg(n)/\ell) = \Omega(\eps^{-1} k/\ell)$. Hence no dependencies on $d$ here. Similarly, for the cross terms, we got the requirement $s = \Omega(\eps^{-1} \lg(n)/\sqrt{\ell}) = \Omega(\eps^{-1} k/\sqrt{\ell})$. Thus the dependency on $d$ comes only from preserving the norms of the heads.

For the heads, we argued that there were $\binom{d}{\ell}$ choices for the positions of the heads and thereafter, we needed an $\eps$-net on the chosen $\ell$ positions. This resulted in $d^{O(\ell)}$ heads in the net and we then used Kane and Nelson's analysis yielding $s =O(\eps^{-1} \lg(1/\delta))$ with $\delta = d^{-O(\ell)}$. Thus we need a tighter bound on the number of heads to avoid the dependency on $d$.

The first idea is to change the definition of the head $w_{head}$ to be all entries $w_i$ of $w$ with $|w_i| \geq 1/\sqrt{\ell}$. This is a small but crucial change from the previous definition where the head contained the top $\ell$ entries. To distinguish the two, we instead denote the heavy entries by $w_{heavy}$ and the remaining entries by $w_{light} = w - w_{heavy}$.

Next, we argue that the positions of the at most $\ell$ entries in $w_{heavy}$, must be among a small set of coordinates:
\begin{lemma}
  \label{lem:subspacecover}
Let $V$ be a $k$-dimensional subspace of $\R^d$. For every $\ell \ge 1$, there is a set $S \subseteq [d]$ of coordinates with $|S| \leq k \ell$ such that for every unit vector $v \in V$, all coordinates $i \in [d] \setminus S$ satisfy $|v_i| < 1/\sqrt{\ell}$.
\end{lemma}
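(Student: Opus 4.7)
The plan is to use leverage scores of the subspace $V$. Let $u_1,\dots,u_k$ be an orthonormal basis of $V$ and let $U \in \R^{d\times k}$ be the matrix whose columns are these vectors, so $U^\dagger U = I_k$. For every unit $v \in V$ we can write $v = Uc$ for some $c \in \R^k$ with $\|c\|=1$, and thus $v_i = \langle U^\dagger e_i, c\rangle$.

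By Cauchy-Schwarz, $|v_i| \leq \|U^\dagger e_i\|_2 = \sqrt{(UU^\dagger)_{ii}}$. Define $\tau_i \eqdef (UU^\dagger)_{ii}$ (the $i$-th leverage score). Since $\sum_i \tau_i = \trace(UU^\dagger) = \trace(U^\dagger U) = k$, only few coordinates can have large leverage score: the number of $i \in [d]$ with $\tau_i \geq 1/\ell$ is at most $k\ell$ by a direct averaging argument. Take $S \eqdef \{ i \in [d] : \tau_i \geq 1/\ell\}$; then $|S| \leq k\ell$, and for every $i \notin S$ and every unit $v \in V$ we get $|v_i| \leq \sqrt{\tau_i} < 1/\sqrt{\ell}$, as required.

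There is no real obstacle here; the only thing worth noting is that $S$ depends only on $V$ and not on any particular choice of basis, since $UU^\dagger$ is the orthogonal projector onto $V$ and is therefore basis-independent. The argument is tight in the sense that $\sum_i \tau_i = k$ is the best global budget one could hope for a $k$-dimensional subspace.
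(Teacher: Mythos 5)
Your proof is correct and is essentially the paper's argument in leverage-score clothing: the quantity $\tau_i = (UU^\dagger)_{ii} = \sum_j (v^j_i)^2$ is exactly what the paper bounds via Cauchy--Schwarz over an orthonormal basis, and the averaging step $\sum_i \tau_i = k$ is the same trace identity. No meaningful difference in approach.
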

Lemma~\ref{lem:subspacecover} states that the positions of the non-zeros in all $w_{heavy}$ must be contained in a small set $S$ of cardinality only $|S| = k \ell$. Thus there are only $\binom{|S|}{\ell} =2^{O(\ell \lg k)}$ possible positions of the non-zeros in $w_{heavy}$. Next, we also argue that once the positions of the heavy entries have been determined, it suffices with $1/2$-net on the chosen positions. Hence we reduce the number of $w_{heavy}$ to just $2^{O(\ell \lg k)}$ and have removed the dependency on $d$. Using Kane and Nelson now gives us that we need $s = \Omega(\eps^{-1} \ell \lg k)$. Balancing this with $s = \Omega(\eps^{-1} k/\sqrt{\ell})$ gives $\ell = (k /\lg k)^{2/3}$. The final bound thus becomes $s = O(\eps^{-1}(k/\lg(1/\eps) + k^{2/3}\lg^{1/3} k))$ as claimed.

\section{Sparsity Upper Bound}
\label{sec:upper}
In this section we prove Theorem~\ref{th:upperBoundMain}. Let $d$ be an integer, let $\varepsilon \in (0,1)$ and let $X \subseteq \mathbb{R}^d$ be some finite set of $n$ vectors. Let $m = O(\varepsilon^{-2}\lg n)$ and let $s = O(\varepsilon^{-1}\lg n/\lg(1/\varepsilon) + \varepsilon^{-1}\lg^{2/3}n\lg^{1/3}d)$.
We will show that if $A$ is sampled as in Kane and Nelson \cite{KN14}, then $A$ is an $\varepsilon$-JL matrix for $X$ with probability at least $1 - O(1/d)$.

For simplicity, we will actually only show that it is an $O(\eps)$-JL matrix. A simple rescaling of $\eps$ by a constant factor implies the result.



As $A$ is a linear transformation, and $n$ appears in all terms inside a logarithm, it is enough to show the following claim (by replacing $X$ with $X'$ containing $x_{i,j} = (x_i - x_j)/\|x_i-x_j\|$ for all $x_i,x_j \in X$).
\begin{claim} \label{c:nUnitVectors}
Assume $A$ is sampled as in Kane and Nelson \cite{KN14} with $m = O(\varepsilon^{-2}\lg n)$ and $s = O(\varepsilon^{-1}\lg n/\lg(1/\varepsilon) + \varepsilon^{-1}\lg^{2/3}n\lg^{1/3}d)$, then for every set $X \subseteq \mathbb{R}^d$ of $n$ unit vectors, it holds that with probability at least $1 - O(1/d)$ for all $x \in X$ that $\|Ax\|^2 \in (1 \pm O(\varepsilon))$.
\end{claim}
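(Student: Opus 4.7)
The plan is to follow the head--tail decomposition outlined in the technical overview. Fix a parameter $\ell$ to be chosen later. For each unit vector $x \in X$, write $x = x_{head} + x_{tail}$, where $x_{head}$ is supported on the $\ell$ coordinates of $x$ largest in magnitude. Since $A$ is linear and $\langle x_{head}, x_{tail}\rangle = 0$,
\[
\|Ax\|^2 \;=\; \|Ax_{head}\|^2 + \|Ax_{tail}\|^2 + 2\langle Ax_{head}, Ax_{tail}\rangle,
\]
so it suffices to show that simultaneously for every $x \in X$ we have $\|Ax_{head}\|^2 \in (1\pm O(\eps))\|x_{head}\|^2$, $\|Ax_{tail}\|^2 \in (1\pm O(\eps))\|x_{tail}\|^2$, and $|\langle Ax_{head}, Ax_{tail}\rangle| = O(\eps)$, each with failure probability $O(1/d)$.

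For the heads, there are at most $\binom{d}{\ell}\le d^\ell$ choices for the support of $x_{head}$; having fixed the support, cover the unit ball of the corresponding $\ell$-dimensional subspace by a standard $\eps$-net of size $(O(1)/\eps)^\ell$. This produces a family of $d^{O(\ell)}$ candidate heads. I would invoke the Kane--Nelson concentration inequality with failure parameter $\delta = d^{-\Omega(\ell)}$, union bound, and then upgrade to all heads by a standard net-to-subspace argument. The resulting requirement is $s = \Omega(\eps^{-1}\ell\log d)$. For the tails, there are at most $n$ of them, and each satisfies $\|x_{tail}\|_\infty / \|x_{tail}\| \le 1/\sqrt\ell$. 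Jagadeesan's bound for vectors with small $\ell_\infty/\ell_2$ ratio then yields what we need with failure probability $1/n^3$, provided $s = \Omega(\eps^{-1}\log n/\log(1/\eps))$ and $s = \Omega(\eps^{-1}\log n/\ell)$ so that $1/\sqrt\ell \le \sqrt{\eps s/\log n}$.

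The main obstacle is controlling the cross term $\langle Ax_{head}, Ax_{tail}\rangle$ for all $x \in X$ simultaneously, since neither a net over tails nor a bound on the $\ell_\infty/\ell_2$ ratio of $x$ itself is available. Following the outline, I would first prove a structural lemma about the Kane--Nelson sketch: for the eventual choice of parameters (in particular $s \le \eps m$ and $\ell \le \eps^{-1/2}$), with high probability over $A$ and simultaneously for every head in the committed net, only $o(m)$ rows $i \in [m]$ have more than a constant number (say $5$) of columns of $A$ nonzero on $\supp(x_{head})$; consequently $|(Ax_{head})_i| \le \sqrt{5/s}$ outside this small set of rows. Writing
\[
\langle Ax_{head}, Ax_{tail}\rangle \;=\; \sum_{i \in [m]}\sum_{j \in \supp(x_{tail})} (Ax_{head})_i\, A_{ij}\, (x_{tail})_j,
\]
I would split the sum depending on whether $|(Ax_{head})_i|$ and $|(x_{tail})_j|$ exceed suitable thresholds. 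The heavy part involves only few rows (by the structural lemma) and few columns (since $x$ is unit), and can be bounded by Cauchy--Schwarz after union bounding over which rows of $A$ have large collision counts; the light part consists of many bounded mean-zero summands whose randomness comes from the independent signs of $A$, and admits a Bernstein/Khintchine moment estimate that forces $s = \Omega(\eps^{-1}\log n/\sqrt\ell)$.

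Finally I would balance the three constraints. The head condition gives $s = \Omega(\eps^{-1}\ell\log d)$, the tail conditions give $s = \Omega(\eps^{-1}\log n/\log(1/\eps)) + \Omega(\eps^{-1}\log n/\ell)$, and the cross term gives $s = \Omega(\eps^{-1}\log n/\sqrt\ell)$. Setting $\ell = (\log n/\log d)^{2/3}$ equates the head and cross-term requirements and produces the promised sparsity $s = O(\eps^{-1}(\log n/\log(1/\eps) + \log^{2/3}n\, \log^{1/3}d))$; the $\log n/\ell$ tail requirement is automatically absorbed. The cross-term concentration is by far the delicate step; the head and tail bounds reduce to existing Kane--Nelson and Jagadeesan guarantees via standard net arguments.
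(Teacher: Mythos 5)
Your proposal follows essentially the same route as the paper: the same head/tail split at the top $\ell$ coordinates, the same three events (heads via a $d^{O(\ell)}$-size net plus Kane--Nelson at $\delta = d^{-\Omega(\ell)}$, tails via Jagadeesan, cross terms via a collision-count structural lemma and a heavy/light split of the double sum), and the same balancing $\ell \approx (\lg n/\lg d)^{2/3}$ capped at $\eps^{-1/2}$.

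There is, however, one step that fails as written: the assertion that each tail satisfies $\|x_{tail}\|_\infty/\|x_{tail}\| \le 1/\sqrt{\ell}$. What is true is only the unnormalized bound $\|x_{tail}\|_\infty \le 1/\sqrt{\ell}$; if the tail carries little mass the normalized ratio can be as large as $1$ (e.g.\ $x$ uniform on $\ell+1$ coordinates leaves a single-entry tail with ratio exactly $1$), so Jagadeesan's hypothesis does not hold at accuracy $\eps$ and the multiplicative guarantee $\|Ax_{tail}\|^2 \in (1\pm O(\eps))\|x_{tail}\|^2$ you state is not obtainable this way. The paper's fix is to invoke Jagadeesan at the inflated accuracy $\hat\eps = \max\{\eps,\ (\lg n/s)\cdot\|v\|_\infty^2/\|v\|^2\}$, for which the ratio hypothesis is satisfied automatically, and then observe that $\hat\eps\|v\|^2 \le \max\{\eps\|v\|^2,\ (\lg n/s)\|v\|_\infty^2\} \le \eps\|v\|^2 + \lg n/(s\ell) = O(\eps)$; i.e.\ one proves, and only needs, the additive statement $\|Ax_{tail}\|^2 \in \|x_{tail}\|^2 \pm O(\eps)$, which still suffices when the three pieces are recombined because $\|x_{head}\|^2 + \|x_{tail}\|^2 = 1$. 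A second, smaller omission: in your cross-term analysis the variance of the light part is controlled by $\|Ax_{head}\|^2$, so you must either condition on $\|Ax_{head}\|^2 < 2$ (the paper phrases its event ${\cal E}_3$ as a disjunction for exactly this reason) or otherwise import the head bound before running the Bernstein estimate. With these two repairs your outline matches the paper's proof.
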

For the rest of the section we therefore prove Claim~\ref{c:nUnitVectors}, and we start by introducing the following notation.

\begin{notation}
Let $x \in \mathbb{R}^{d}$, and let $\ell \in [d]$. Denote by $x_{head(\ell)}$ the vector obtained from $x$ where all but the top $\ell$ entries are zeroed out. Denote $x_{tail(\ell)} = x - x_{head(\ell)}$.
\end{notation}

Let $\ell = \left\lceil \min\left\{\eps^{-1/2}, \left(\frac{\lg n}{\lg d}\right)^{2/3} \right\} \right\rceil$ be an integer. For every $T \in \binom{[d]}{\ell}$, let $\Y_T$ be the set of all vectors $y \in \mathbb{R}^d$ such that $\|y\| \le 1$ and $\supp(y) \subseteq T$. Let $\Y = \bigcup_{T \in \binom{[d]}{\ell}}\Y_T$. Note that for every $i \in [d]$, $e_i \in \Y$.

Fix some set $X \subseteq \mathbb{R}^d$ of $n$ unit vectors.
Define ${\cal E}_1$ to be the set of all matrices $A \in \mathbb{R}^{m\times d}$ such that for all $x \in \Y$, $\|Ax\|^2 \in (1 \pm \varepsilon)\|x\|^2$.
Define ${\cal E}_2$ to be the set of all matrices $A \in \mathbb{R}^{m\times d}$ such that for all $x \in X$, $\|Ax_{tail(\ell)}\|^2 \in \|x_{tail(\ell)}\|^2 \pm \varepsilon$.
Define ${\cal E}_3$ to be the set of all matrices $A \in \mathbb{R}^{m\times d}$ such that for all $x \in X$, either $\|Ax_{head(\ell)}\|^2 > 2$ or $\left|\dotp{Ax_{head(\ell)}}{Ax_{tail(\ell)}}\right| < \varepsilon$.

\begin{claim}
Assume $A \in {\cal E}_1 \cap {\cal E}_2 \cap {\cal E}_3$. Then for every $x \in X$, $\|Ax\|^2 \in (1 \pm O(\varepsilon))$.
\end{claim}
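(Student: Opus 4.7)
The plan is to expand $\|Ax\|^2$ via the support-disjoint decomposition $x = x_{head(\ell)} + x_{tail(\ell)}$, use each of the three events to control one of the resulting terms, and then observe that the errors sum to $O(\varepsilon)$. Concretely, since $x_{head(\ell)}$ and $x_{tail(\ell)}$ have disjoint supports we have
\[
\|Ax\|^2 \;=\; \|Ax_{head(\ell)}\|^2 \;+\; \|Ax_{tail(\ell)}\|^2 \;+\; 2\bigl\langle Ax_{head(\ell)},\,Ax_{tail(\ell)}\bigr\rangle,
\]
and moreover $\|x\|^2 = \|x_{head(\ell)}\|^2 + \|x_{tail(\ell)}\|^2 = 1$, so it suffices to show each of the three pieces on the right matches its unembedded counterpart up to additive $O(\varepsilon)$.

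For the first term, I would note that $x_{head(\ell)}$ is supported on at most $\ell$ coordinates and has norm at most $1$, so letting $T \in \binom{[d]}{\ell}$ be any size-$\ell$ superset of $\supp(x_{head(\ell)})$ we have $x_{head(\ell)} \in \Y_T \subseteq \Y$. Event $\mathcal{E}_1$ therefore yields
\[
\|Ax_{head(\ell)}\|^2 \;\in\; (1\pm\varepsilon)\|x_{head(\ell)}\|^2,
\]
which in particular (since $\|x_{head(\ell)}\|^2\le 1$ and $\varepsilon<\varepsilon_0<1$) forces $\|Ax_{head(\ell)}\|^2 \le 1+\varepsilon \le 2$. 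For the second term, event $\mathcal{E}_2$ directly gives $\|Ax_{tail(\ell)}\|^2 \in \|x_{tail(\ell)}\|^2 \pm \varepsilon$.

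The cross term is where $\mathcal{E}_3$ comes in, and this is the only step that requires a small observation: $\mathcal{E}_3$ is a disjunction, and we need to rule out the first branch. But the bound $\|Ax_{head(\ell)}\|^2 \le 2$ we just derived from $\mathcal{E}_1$ does exactly that, so the second branch must hold, giving $|\langle Ax_{head(\ell)}, Ax_{tail(\ell)}\rangle| < \varepsilon$.

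Combining the three estimates,
\[
\|Ax\|^2 \;\in\; (1\pm\varepsilon)\|x_{head(\ell)}\|^2 \;+\; \|x_{tail(\ell)}\|^2 \;\pm\; \varepsilon \;\pm\; 2\varepsilon
\;=\; \|x\|^2 \;\pm\; \varepsilon\|x_{head(\ell)}\|^2 \;\pm\; 3\varepsilon,
\]
and using $\|x_{head(\ell)}\|^2 \le 1 = \|x\|^2$ this is $1 \pm 4\varepsilon = 1 \pm O(\varepsilon)$. There is no real obstacle here; the only point that is not purely formal is the interplay between $\mathcal{E}_1$ and $\mathcal{E}_3$ used to discard the $\|Ax_{head(\ell)}\|^2 > 2$ branch, which is why $\mathcal{E}_1$ is stated over the whole set $\Y$ rather than just over $X$ itself.
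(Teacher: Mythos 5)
Your proposal is correct and follows essentially the same route as the paper: expand $\|Ax\|^2$ via the support-disjoint head/tail decomposition, use $\mathcal{E}_1$ (noting $x_{head(\ell)}\in\Y$) to bound the head term and to force $\|Ax_{head(\ell)}\|^2<2$, which discards the first branch of $\mathcal{E}_3$ and yields the cross-term bound, and use $\mathcal{E}_2$ for the tail. The only difference is that you spell out the justification that $x_{head(\ell)}\in\Y$ and the final arithmetic slightly more explicitly than the paper does.
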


\begin{proof}
Let $x \in X$, then $\|Ax\|^2 = \|Ax_{head(\ell)}\|^2 + \|Ax_{tail(\ell)}\|^2 + 2 \dotp{Ax_{head(\ell)}}{Ax_{tail(\ell)}}$. If $A \in {\cal E}_1$, then $\|Ax_{head(\ell)}\|^2 \in (1 \pm \varepsilon)\|x_{head(\ell)}\|^2$. Specifically $\|Ax_{head(\ell)}\|^2  < 2$ and thus since we also have $A \in {\cal E}_3$, it must be the case that $\left|\dotp{Ax_{head(\ell)}}{Ax_{tail(\ell)}}\right| < \varepsilon$.
Therefore 
\[
\|Ax\|^2 \le (1+\varepsilon)\|x_{head(\ell)}\|^2 + \|x_{tail(\ell)}\|^2 + \varepsilon + 2 \varepsilon \le (1+O(\varepsilon))
\]
Similarly
\[
\|Ax\|^2 \ge (1-\varepsilon)\|x_{head(\ell)}\|^2 + \|x_{tail(\ell)}\|^2 -\varepsilon - 2 \varepsilon \ge (1-O(\varepsilon))
\]
\end{proof}

It remains to show that $\Pr[A \in {\cal E}_1 \cap {\cal E} \cap {\cal E}_3] \ge 1 - O(1/d)$. This is implied by the next three claims, bounding the probability of each of the events separately.

\begin{claim}
$\Pr[A \in {\cal E}_1] \ge 1- d^{-1}$.
\end{claim}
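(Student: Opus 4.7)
The plan is to reduce the event $\mathcal{E}_1$ to a finite union bound via a standard net argument. For each $T \in \binom{[d]}{\ell}$, the set $\mathcal{Y}_T$ lies in the unit ball of the $\ell$-dimensional coordinate subspace $V_T := \{y \in \mathbb{R}^d : \supp(y) \subseteq T\}$, so by linearity of $A$ (which gives $\|Ay\|^2/\|y\|^2 = \|A(y/\|y\|)\|^2$ for $y \neq 0$) it suffices to show that, with probability at least $1-1/d$, every unit vector in every $V_T$ has $\|A \cdot\|^2 \in 1 \pm \varepsilon$. Fix a small constant $\eta \in (0,1/2)$ and let $\mathcal{N}_T$ be a standard $\eta$-net on the unit sphere of $V_T$; a volume comparison gives $|\mathcal{N}_T| \leq (1+2/\eta)^\ell = C^\ell$. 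Set $\mathcal{N} = \bigcup_T \mathcal{N}_T$, so $|\mathcal{N}| \leq \binom{d}{\ell}\, C^\ell \leq (Cd)^\ell$.

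Next, I invoke the Kane-Nelson single-vector guarantee \cite{KN14}: provided $s = \Omega(\varepsilon^{-1}\lg(1/\delta))$ and $m = \Omega(\varepsilon^{-2}\lg(1/\delta))$, for any fixed unit vector $w$ one has $\Pr[|\|Aw\|^2 - 1| > \varepsilon] \leq \delta$. Setting $\delta := d^{-(\ell+1)} C^{-\ell}$ gives $\lg(1/\delta) = O((\ell+1)\lg d)$, and a union bound over $\mathcal{N}$ yields overall failure probability at most $|\mathcal{N}| \cdot \delta \leq 1/d$. The induced requirement is $s = \Omega(\varepsilon^{-1}(\ell+1)\lg d)$; with $\ell \leq \lceil (\lg n/\lg d)^{2/3}\rceil$ and (the relevant regime) $d \leq n$, this becomes $s = O(\varepsilon^{-1}\lg^{2/3}n\,\lg^{1/3}d)$, which is covered by the sparsity budget of Claim~\ref{c:nUnitVectors}. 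Likewise $m = \Omega(\varepsilon^{-2}\lg n)$ comfortably exceeds the required $\Omega(\varepsilon^{-2}\ell\lg d)$.

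The final step is to pass from the net to arbitrary unit vectors in $V_T$ via the standard successive-approximation argument: any unit $y \in V_T$ decomposes as $y = \sum_{i \geq 0} c_i w_i$ with $w_i \in \mathcal{N}_T$ and $|c_i| \leq \eta^i$, so on the net event the triangle inequality, together with bilinearity $\|Ay\|^2 - \|Aw_0\|^2 = \langle A(y-w_0), A(y+w_0)\rangle$, yields $\|Ay\|^2 \in 1 \pm O(\varepsilon + \eta)$. Choosing $\eta$ to be a sufficiently small absolute constant and rescaling $\varepsilon$ by another constant then gives $\|Ay\|^2 \in 1 \pm \varepsilon$, i.e.\ $A \in \mathcal{E}_1$. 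The main thing I expect to need care with is bookkeeping of constants: Kane-Nelson must actually be applied with $\varepsilon/C_0$ in place of $\varepsilon$ in order to absorb the constant-factor loss at the net-to-sphere step, but this only alters hidden constants in the requirements on $s$ and $m$, which the stated bounds already accommodate.
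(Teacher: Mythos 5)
Your overall strategy (discretize $\Y$, apply the Kane--Nelson single-vector bound with $\delta$ small enough to union bound over the discretization, then extend to all of $\Y$) is the same as the paper's, and your parameter bookkeeping for $s$ and $m$ is fine. The gap is in the final net-to-sphere step. With $\eta$ a \emph{constant}, the naive successive-approximation/bilinearity argument you describe really does give only $\|Ay\|^2 \in 1 \pm O(\eps + \eta)$: the cross term $\langle Aw_0, A(y-w_0)\rangle$ is bounded by $\|Aw_0\|\cdot\|A(y-w_0)\| = O(\eta)$, an additive \emph{constant} that is not proportional to $\eps$. No rescaling of $\eps$ by a constant factor can absorb an error term of size $\Theta(\eta)$ when $\eps \to 0$, so the conclusion $A \in {\cal E}_1$ does not follow as written. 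There are two standard repairs: (i) take $\eta = c\eps$ instead of a constant; the net size becomes $(1/\eps)^{O(\ell)}$ per support, but since $d \geq m = \Omega(\eps^{-2}\lg n)$ this is still $d^{O(\ell)}$ and your union bound and sparsity requirement are unchanged (this is exactly what the paper's technical overview describes); or (ii) keep a constant net but run the argument on the symmetric operator $B = A^\top A - I$ restricted to $V_T$, using $y^\top B y = w_0^\top B w_0 + (y-w_0)^\top B(y+w_0)$ to get $\sup_{\|y\|=1}|y^\top B y| \le \eps/(1-O(\eta))$; your bilinear identity is written with $A^\top A$ rather than $B$, which is why the $O(\eta)$ loss appears.

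For comparison, the paper sidesteps operator-norm control entirely: it rounds each coordinate of $y \in \Y$ to a multiple of $d^{-3}$, so the discretized set has size $\binom{d}{\ell}(2d^3)^\ell = 2^{O(\ell\lg d)}$, and then bounds the rounding error \emph{deterministically} by $\|A(x-y)\| \le \|A\|_F\|x-y\| = O(\eps)$ using $\|A\|_F^2 = d$ and $\|x-y\|^2 \le \ell/d^3$. This avoids any second-level net argument at the cost of a polynomially-in-$d$ finer discretization, which the union bound tolerates since the granularity only enters $\lg(1/\delta)$ multiplicatively through $\ell$. Either repair of your step (i) or (ii) would make your proof complete and essentially equivalent in strength.
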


\begin{proof}
Denote $\delta = 2^{-\Omega(\sqrt[3]{\lg^2 n \lg d})}$. As $n \ge d$ we get that $m \ge \Omega(\varepsilon^{-2}\lg (1/\delta))$ and $s \ge \Omega(\varepsilon^{-1}\lg (1/\delta))$. Following Kane and Nelson \cite{KN14}, for every unit vector $x \in \mathbb{R}^d$ we have that $\Pr\left[\|Ax\|^2\in (1\pm \varepsilon)\right] \ge 1 - \delta$.
Denote by $\hat{\Y}$ the set of all vectors $y \in \Y$ such that for every $i \in T$, $d^3y_i$ is an integer. Then for every $y \in \hat{\Y}$, for every $i \in \supp(y)$, $d^3y_i \in \{-d^3,-d^3+1,\ldots,d^3\}$, and therefore 
\[
|\hat{\Y}| \le \binom{d}{\ell}(2d^3)^{\ell} \le (2d^4)^{\ell} \le 2^{O(\ell \lg d)} = 2^{O(\sqrt[3]{\lg^2 n \lg d})} \le \frac{1}{\sqrt{\delta}} \;.
\]

Therefore with probability at least $1 - \sqrt{\delta} \ge 1 - d^{-1}$, we get that for all $y \in \hat{\Y}$, $\|Ay\|^2\in (1\pm \varepsilon)$.

Assume therefore that for all $y \in \hat{\Y}$, $\|Ay\|^2\in (1\pm \varepsilon)$. Let $x \in \Y$, and let $y \in \hat{\Y}$ be the closest vector in $\hat{\Y}$ to $x$. Then $\|x- y\|^2 \le \ell/d^3$. Since $\|A\|_F^2 = d$ we get that
\[
\|A(x-y)\| \le \|A\|_F \|x- y\| \le \sqrt{\frac{sd\ell}{d^3}} \le O\left(\sqrt{\frac{(\lg n)^{5/3}}{\varepsilon d^2}}\right) = O(\varepsilon) \;,
\]
where the last inequality is due to the fact that $d \ge \frac{\lg n}{\varepsilon^2}$. Therefore
\[
\|Ax\| \le \|Ay\| + \|A(x-y)\| \le 1 + \varepsilon + O(\varepsilon) \le 1 + O(\varepsilon), 
\]
and similarly $\|Ax\| \ge 1-O(\varepsilon)$.
\end{proof}

For the second claim, we make use of the following result by Jagadeesan~\cite{J19}.
\begin{theorem}[Jagadeesan~\cite{J19}]
 \label{thm:meenah}
 For any $0 < \delta < 1$ and $0 < \eps < \eps_0$ for some constant $\eps_0$, assume $A$ is sampled as in Kane and Nelson \cite{KN14} with $m \geq \Theta(\eps^{-2} \lg(1/\delta))$ and $m \geq s \exp(\max\{1,\ln(1/\delta)/(\eps s)\})$, then for any vector $v$ with
 \[
   \frac{\|v\|_\infty}{\|v\|} = O\left(\sqrt{\frac{\eps s \ln(m \eps^2/\ln(1/\delta))}{\ln(1/\delta)}}\right)
 \]
 we have $\Pr[\|Av\|^2 \in (1 \pm \eps)\|v\|^2] \geq 1-\delta$.
\end{theorem}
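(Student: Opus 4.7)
The plan is to bound $\E[Z^p]$ for $Z := \|Av\|^2 - \|v\|^2$ with $p = \Theta(\lg(1/\delta))$ an even integer, then apply Markov's inequality. After normalizing $\|v\|_2 = 1$, I would unroll the Kane--Nelson construction: partition $[m]$ into groups $G_1,\dots,G_s$ of size $m/s$; for every column $j$ and every group $g$ pick a uniform position $r_{g,j}\in G_g$ and an independent uniform sign $\sigma_{g,j}$, placing $\sigma_{g,j}/\sqrt{s}$ at row $r_{g,j}$ of column $j$. A direct expansion cancels the diagonal and gives
\[
  Z \;=\; \frac{1}{s}\sum_{g=1}^{s} Z_g, \qquad Z_g \;=\; \sum_{j \neq k} \mathbf{1}[r_{g,j} = r_{g,k}]\, \sigma_{g,j}\sigma_{g,k}\, v_j v_k,
\]
with the $Z_g$ mutually independent across groups.

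Conditioned on the positions $r_{g,\cdot}$, each $Z_g$ is a mean-zero Rademacher chaos of order two with coefficient matrix $M_g$ whose off-diagonal entries are $(M_g)_{jk} = \mathbf{1}[r_{g,j}=r_{g,k}]\, v_j v_k$. By the Hanson--Wright moment inequality,
\[
  \bigl(\E_\sigma |Z_g|^p\bigr)^{1/p} \;\le\; O\bigl(\sqrt{p}\,\|M_g\|_F \,+\, p\,\|M_g\|_{\mathrm{op}}\bigr).
\]
Taking expectation over the positions, $\E\|M_g\|_F^2 = (s/m)\sum_{j\neq k}v_j^2 v_k^2 \le s/m$, which yields the ``Gaussian-type'' contribution. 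The operator-norm piece tracks how many columns can collide at a single row: because each collision contributes weight at most $\|v\|_\infty^2$, the $\|v\|_\infty/\|v\|_2$ hypothesis is exactly what is needed to suppress this term. I would then combine across the independent groups by Rosenthal's inequality (or equivalently by a direct moment expansion indexed by the coincidence pattern of the $p$ copies) to obtain $(\E |Z|^p)^{1/p} \le O(\eps)$ for $p = \Theta(\lg(1/\delta))$, at which point Markov yields $\Pr[|Z|>\eps] \le 2^{-\Omega(p)} \le \delta$.

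The hard part will be the combinatorial accounting of the ``higher-collision'' contributions to the $p$-th moment: one must classify how $\|v\|_\infty$ enters at each order (pairs of columns hashed to the same row, triples, quadruples, and so on) and verify that the stated hypotheses are the thresholds below which every such level is dominated by the main $\sqrt{p/m}$ term. In particular, the condition $m \ge s\exp(\max\{1,\ln(1/\delta)/(\eps s)\})$ is precisely what controls the Poisson-type load on any single row, keeping the operator-norm piece bounded. A cleaner variant, closer to Jagadeesan's actual argument, bypasses Hanson--Wright and performs a direct graph-based moment expansion, where edges incident to re-used columns pay a factor of $\|v\|_\infty$ and fresh edges pay a Frobenius factor; the graph-counting bookkeeping is then the technical crux, and it is where the tightness of the $\|v\|_\infty/\|v\|_2$ bound in the theorem is pinned down.
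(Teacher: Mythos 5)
A point of scope first: the paper does not prove Theorem~\ref{thm:meenah} at all --- it is imported verbatim from Jagadeesan~\cite{J19} and used as a black box in the proof of Claim~\ref{c:tails}. So there is no in-paper proof to compare against; what you have written is a sketch of how one would re-derive the cited result.

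As a sketch, your skeleton is correct and is indeed the standard route (and close in spirit to Jagadeesan's actual argument). The decomposition $Z=s^{-1}\sum_g Z_g$ into independent order-two Rademacher chaoses, one per block, is right: the diagonal terms sum exactly to $\|v\|^2$ because each column has exactly one nonzero per block. Conditioning on the hash positions and invoking the moment form of Hanson--Wright is legitimate, $\E\|M_g\|_F^2=(s/m)\sum_{j\neq k}v_j^2v_k^2\le s/m$ is correct, and you correctly identify that the operator-norm term is governed by the maximal row load weighted by $\|v\|_\infty^2$, which is where both the $\|v\|_\infty/\|v\|$ hypothesis and the condition $m\ge s\exp(\max\{1,\ln(1/\delta)/(\eps s)\})$ must enter.

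The genuine gap --- which you flag yourself but do not close --- is that the entire quantitative content of the theorem lives in the step you defer. The theorem is not merely ``some bound on $\|v\|_\infty/\|v\|$ suffices''; it is the specific threshold $\sqrt{\eps s\ln(m\eps^2/\ln(1/\delta))/\ln(1/\delta)}$, whose logarithmic factor is exactly what makes Claim~\ref{c:tails} work with $s=\Theta(\eps^{-1}\lg n/\lg(1/\eps))$ in the regime $\hat\eps=\eps$. A coarse execution of your plan does not recover it: taking worst-case bounds $\|M_g\|_{\mathrm{op}}\le L\|v\|_\infty^2$ with $L$ the maximum row load, and then applying Rosenthal across blocks, loses precisely the $\ln(m\eps^2/\ln(1/\delta))$ gain, because the high-load events are rare and must be weighted by their (Poisson-type) probabilities inside the $p$-th moment rather than bounded uniformly. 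Concretely, one must stratify the $p$-th moment by the multiset of row loads, pay $\|v\|_\infty^2$ only per \emph{excess} collision, and sum the resulting series under the constraint $m\ge s\exp(\cdots)$; this bookkeeping is the proof, and it is absent. So the proposal is a correct plan with the theorem's crux left as an acknowledged to-do, not a proof.
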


Using Theorem~\ref{thm:meenah}, we now turn to bound the probability of ${\cal E}_2$.

\begin{claim} \label{c:tails}
$\Pr[A \in {\cal E}_2] \ge 1-n^{-1}$.
\end{claim}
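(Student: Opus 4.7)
The plan is to apply Jagadeesan's Theorem~\ref{thm:meenah} to carefully chosen vectors derived from each $x \in X$, and then union bound over $X$. Fix $x \in X$ and set $v := x_{tail(\ell)}$. A sorting argument gives $\|v\|_\infty \le 1/\sqrt{\ell}$: letting $t$ denote the $(\ell+1)$-th largest magnitude among the entries of $x$, we have $\ell t^2 \le \|x\|^2 = 1$, and every surviving coordinate of $v$ has magnitude at most $t$. Also $\|v\| \le 1$.

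The principal obstacle is that the precondition of Theorem~\ref{thm:meenah} depends on the ratio $\|v\|_\infty/\|v\|$, which can be large when $\|v\|$ itself is small, even though the desired conclusion $\|Av\|^2 \in \|v\|^2 \pm \eps$ still holds. I would handle this by a case split on $\|v\|^2$. When $\|v\|^2 \ge 1/2$, the ratio is at most $\sqrt{2/\ell}$ and Theorem~\ref{thm:meenah} applied with $\eps' = \eps$ and $\delta = 1/(3n^2)$ yields $\|Av\|^2 \in (1\pm \eps)\|v\|^2$, whose absolute deviation from $\|v\|^2$ is at most $\eps \|v\|^2 \le \eps$. When $\|v\|^2 < 1/2$, I would construct a padding vector $u \in \R^d$ supported on $\ell$ coordinates in $[d] \setminus \supp(v)$ (which exist since $|\supp(v)| \le d - \ell$), each of value $\pm \sqrt{(1-\|v\|^2)/\ell}$. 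Then $\|u\|^2 = 1 - \|v\|^2 \ge 1/2$, $\|u\|_\infty \le 1/\sqrt{\ell}$, $\supp(u) \cap \supp(v) = \emptyset$, and the three vectors $w_\pm := v \pm u$ and $u$ each have $\ell_\infty/\ell_2$ ratio at most $\sqrt{2/\ell}$ with $\|w_\pm\| = 1$.

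Apply Theorem~\ref{thm:meenah} with $\delta = 1/(3n^2)$ to each of $w_+, w_-, u$. Its preconditions hold: $m = \Theta(\eps^{-2}\lg n) = \Theta(\eps^{-2} \lg(1/\delta))$ handles the dimension bound and $m \ge s \exp(\max\{1, \ln(1/\delta)/(\eps s)\})$ follows from the prescribed sizes of $m$ and $s$; the ratio condition $\sqrt{2/\ell} \le O(\sqrt{\eps s \ln(m\eps^2/\ln(1/\delta))/\ln(1/\delta)})$ reduces after squaring to $s \gtrsim \lg n/(\eps \ell)$, implied by $s = \Omega(\eps^{-1}\lg n/\lg(1/\eps) + \eps^{-1}\lg^{2/3} n \lg^{1/3} d)$ together with $\ell \le \eps^{-1/2}$. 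The parallelogram identity
\[
  \|Av\|^2 \;=\; \tfrac{1}{2}\bigl(\|Aw_+\|^2 + \|Aw_-\|^2\bigr) - \|Au\|^2,
\]
combined with $\|Aw_\pm\|^2 \in 1 \pm \eps$ and $\|Au\|^2 \in (1-\|v\|^2)(1\pm\eps)$, then gives $\|Av\|^2 \in \|v\|^2 \pm 2\eps$. A union bound over the at most $3n$ invocations of Theorem~\ref{thm:meenah} (at most three per $x \in X$) yields total failure probability at most $3n \cdot (3n^2)^{-1} = 1/n$, so $\Pr[A \in \mathcal{E}_2] \ge 1 - 1/n$, and the extra constant in the absolute error is absorbed by the constant $\eps$-rescaling noted at the start of Section~\ref{sec:upper}. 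The main technical hurdle is exactly the small-$\|v\|$ regime, resolved by the padding trick that makes Theorem~\ref{thm:meenah}'s ratio precondition uniformly applicable.
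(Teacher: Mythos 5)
Your proof is correct, but it resolves the central difficulty by a different mechanism than the paper. Both arguments begin identically ($v = x_{tail(\ell)}$, $\|v\|_\infty \le 1/\sqrt{\ell}$, apply Theorem~\ref{thm:meenah}, union bound over $X$), and both identify the same obstacle: the ratio $\|v\|_\infty/\|v\|$ in the theorem's hypothesis blows up when $\|v\|$ is small. The paper handles this by inflating the error parameter adaptively, setting $\hat{\varepsilon} = \max\{\varepsilon, \tfrac{\lg n}{s}(\|v\|_\infty/\|v\|)^2\}$ so that the ratio precondition is satisfied by construction, and then observing that the resulting multiplicative error $\hat\varepsilon\|v\|^2 = \max\{\varepsilon\|v\|^2, \tfrac{\lg n}{s}\|v\|_\infty^2\}$ is still an additive $O(\varepsilon)$ because $\|v\|^2$ shrinks exactly as the ratio grows. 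You instead keep $\varepsilon$ fixed and, in the small-$\|v\|$ regime, pad $v$ to unit vectors $w_\pm = v \pm u$ with disjointly supported flat $u$, apply the theorem to $w_+$, $w_-$, $u$ (all of which genuinely satisfy the ratio bound $O(1/\sqrt{\ell})$), and recover $\|Av\|^2$ by polarization. Both routes bottom out in the identical quantitative requirement $s\varepsilon\ell = \Omega(\lg n)$, which is where the $\lg^{2/3}n\lg^{1/3}d$ term in $s$ is consumed. The paper's version is a single uniform invocation and a smaller union bound, at the price of having to re-verify the theorem's preconditions for the inflated $\hat\varepsilon$; yours costs three invocations per vector and a factor-$2$ loss in the additive error (absorbed by the $\varepsilon$-rescaling), but only ever uses the theorem in its ``intended'' regime. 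One small point worth making explicit in your write-up: the $\ell$ free coordinates for the padding vector exist because either $x$ has at least $\ell$ nonzeros (so the head occupies $\ell$ coordinates outside $\supp(v)$) or $v=0$ and the claim is vacuous for that $x$.
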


\begin{proof}
Fix $x \in X$. Denote $v = x_{tail(\ell)}$, and let $\hat{\varepsilon} = \max\{\varepsilon, \frac{\lg n}{s}\left(\frac{\|v\|_\infty}{\|v\|}\right)^2 \}$. We wish to apply Theorem~\ref{thm:meenah} and thus start by verifying that our choice of parameters satisfy the constraints in the theorem. Applying the right constants, we have that $m \ge \Theta(\varepsilon^{-2}\log n) \ge \Theta(\hat{\varepsilon}^{-2}\log n)$. Furthermore
\begin{equation*}
\begin{split}
s e^{\Theta(\max\{1 , (\hat{\varepsilon} s)^{-1}\lg n\})}  &\le s e^{\Theta(\max\{1 , (\varepsilon s)^{-1}\lg n\})} \le s e^{\max\{\Theta(1) , -\lg \varepsilon\}} 
= s \cdot \max\{e^{\Theta(1)}, \frac{1}{\varepsilon}\} \\ &= O\left(\frac{1}{\varepsilon}\left(\sqrt[3]{\lg^2 n \lg d} + \lg n/\lg(1/\eps)\right)
\right)\max\{e^{\Theta(1)}, \frac{1}{\varepsilon}\} \le m.
\end{split}
\end{equation*}

Finally note that
\[
\sqrt{\hat{\varepsilon}s \frac{\lg \frac{m \hat{\varepsilon}^2}{\lg n}}{\lg n}} \ge \sqrt{\frac{\lg n}{s}\left(\frac{\|v\|_\infty}{\|v\|}\right)^2\cdot s \frac{1}{\lg n}} \ge \frac{\|v\|_\infty}{\|v\|}
\]
Therefore, Theorem~\ref{thm:meenah} gives us that with probability $\ge 1 - \frac{1}{n^{2}}$ we have $\|Av\|^2 \in (1\pm \hat{\varepsilon})\|v\|^2$. That is $\|Av\|^2 \in \|v\|^2 \pm \hat{\varepsilon}\|v\|^2 = \|v\|^2 \pm \max\{\varepsilon \|v\|^2, \frac{\lg n}{s}\|v\|_\infty^2 \}$. Note first that $\varepsilon \|v\|^2 < \varepsilon$. Next, as $v = x_{tail(\ell)}$, and $\|x\| = 1$ we have that $\|v\|_\infty \le \frac{1}{\sqrt{\ell}}$, and therefore $\frac{\lg n}{s}\|v\|_\infty^2 \le \frac{\lg n}{s\ell} = O\left(\frac{1}{s} \cdot \max\{\lg^{1/3} n \lg^{2/3} d\;,\; \eps^{1/2} \lg n \}\right) = O(\eps)$. We conclude that with probability $\ge 1 - \frac{1}{n^{2}}$ we have that $\|Ax_{tail(\ell)}\|^2 \in \|x_{tail(\ell)}\|^2\pm O(\varepsilon)$. Applying a union bound we get that $\Pr[A \in {\cal E}_2] \ge 1 - \frac{1}{n}$.
\end{proof}

The following claim, that essentially shows that with high probability over the choice of $A$, the cross terms are small constitute the technical crux of the upper bound result, and its proof requires a much more careful examination of the construction by Kane and Nelson~\cite{KN14}.

\begin{claim}
  \label{c:cross}
$\Pr[A \in {\cal E}_3] \ge 1 - n^{-1}$.
\end{claim}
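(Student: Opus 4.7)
The plan is to show, for each fixed $x \in X$, that the bad event $\{\|v\|^2 \le 2 \text{ and } |\dotp{v}{Aw}| \ge \varepsilon\}$, where $v := Ax_{head(\ell)}$ and $w := x_{tail(\ell)}$, occurs with probability at most $1/n^2$; a union bound over $X$ then yields $\Pr[A \in {\cal E}_3] \ge 1 - 1/n$. The structural insight is a two-level decomposition of $\dotp{v}{Aw}$ into a ``well-spread'' part where both $v$ and $w$ have small coordinates, controlled by a Rademacher concentration over the signs of $A$'s tail columns, and ``heavy'' sub-sums involving the few large coordinates of $v$ or of $x$, controlled by counting plus the smallness of the heavy sets.

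The first step is a row-collision bound. In the Kane--Nelson construction the $m$ rows are partitioned into $s$ buckets of $m/s$ rows, and each column of $A$ places exactly one $\pm 1/\sqrt{s}$ entry uniformly in each bucket. Let $C_i$ be the number of columns in $\supp(x_{head(\ell)})$ that are non-zero in row $i$; then within a single bucket the $C_i$'s form a balls-in-bins distribution with $\ell$ balls and $m/s$ bins, independent across buckets. Since $\ell s/m \le \ell\varepsilon \le \sqrt{\varepsilon}$ in our regime, a high-moment Markov argument shows the heavy-row set $H := \{i : C_i \ge 6\}$ satisfies $|H| \le B$ for a small $B = o(\varepsilon s/\ell)$ with probability at least $1 - 1/(2n^2)$, after union-bounding over $X$. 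Outside $H$, one has the deterministic bound $|v_i| \le 5/\sqrt{s}$.

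Now condition on the hashes and signs of the $\ell$ columns in $\supp(x_{head(\ell)})$, thereby fixing $v$ and $H$, and assume $\|v\|^2 \le 2$ and $|H| \le B$. The remaining randomness is independent, and we may write
\[
\dotp{v}{Aw} = \frac{1}{\sqrt{s}} \sum_{j \in \supp(w)} \sum_{k=1}^{s} \sigma_{j,k}\, x_j\, v_{h_{j,k}},
\]
a Rademacher sum in the signs $\sigma_{j,k}$, conditional on the tail hashes $h_{j,k}$. Let $L := \{j : |x_j| > 1/\sqrt{\ell}\}$; since $\|x\|=1$, $|L| \le \ell$. Split the $(j,k)$-sum into four parts according to whether $h_{j,k} \in H$ and whether $j \in L$. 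In the light--light case each coefficient has magnitude at most $5/(\sqrt{\ell}\,s)$, and the expected sum of squared coefficients over the tail hashes equals $\|w\|^2\|v\|^2/m = O(1/m)$; an intermediate balls-in-buckets concentration gives a matching high-probability bound on this variance, so Hoeffding bounds the light--light contribution by $\varepsilon/2$ with probability $n^{-\Omega(1)}$ because $\varepsilon^2 m = \Omega(\lg n)$.

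The main obstacle is the heavy--heavy interaction: a naive Cauchy--Schwarz of the form $\|v_{|H}\| \cdot \|(Aw)_{|H}\|$ yields only $O(\sqrt{|H|})$, which is not $O(\varepsilon)$. To overcome this, I would exploit that each tail column's hash lands inside $H$ with probability only $s|H|/m \le \varepsilon|H|$, so a second balls-in-buckets concentration shows $\|(Aw)_{|H}\|^2$ concentrates around $|H|\,\|w\|^2/m = O(|H|\,\varepsilon^2/\lg n)$ rather than $\|w\|^2$. Combining this with $|H| \le B$ and the parameter choices handles the heavy--heavy sum; the two mixed sums (one index heavy, one light) are handled similarly by Cauchy--Schwarz against the already-established concentration of $(Aw)_{|H}$ or by Hoeffding over residual signs using $|L| \le \ell$. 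The real bookkeeping challenge is making all failure probabilities sum to $1/n^2$ while tracking the dependence on $\ell$, $B$, and $s$ so that the final constraint on $s$ matches $O(\varepsilon^{-1}(\lg n/\lg(1/\varepsilon) + \lg^{2/3} n \lg^{1/3} d))$.
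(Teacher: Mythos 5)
Your overall architecture matches the paper's: condition on the head columns producing few ``heavy'' rows of $v=Ax_{head(\ell)}$, then exploit the independent sign/hash randomness of the tail columns, splitting the cross term by coordinate size. However, your treatment of the heavy rows has a genuine gap. First, a small but consequential slip: with $L:=\{j: |x_j|>1/\sqrt{\ell}\}$ you get $L\cap\supp(x_{tail(\ell)})=\emptyset$ (every tail entry of a unit vector is at most $1/\sqrt{\ell}$ by definition of the head), so your four-way split degenerates and the pairs $(i,j)$ with $i\in H$ must all be handled by your ``heavy'' argument. The paper instead uses the strictly smaller threshold $1/(\sqrt{\ell}\lg^2(1/\eps))$, which is what makes the doubly-heavy set $R$ both nonempty and \emph{small} ($|R|\le 6\ell\lg n\lg^3(1/\eps)$), while pushing every pair with a light tail coordinate into the Hoeffding part with coefficient $|u_ix_j|/\sqrt{s}\le 1/(s\lg^2(1/\eps))=O(\eps/\lg n)$; with your threshold, a pair with $i\in H$ and $|x_j|$ just below $1/\sqrt{\ell}$ has coefficient up to $1/s=\Theta(\eps\lg(1/\eps)/\lg n)$, too large for Hoeffding at failure probability $n^{-2}$.

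Second, and more fundamentally, the Cauchy--Schwarz route for the heavy rows requires $\|(Aw)_{|H}\|^2=O(\eps^2)$ with probability $1-O(n^{-2})$. The mean is $|H|\|w\|^2/m$, but the upper tail is driven by the largest tail coordinates (magnitude up to $1/\sqrt{\ell}$) hashing into $H$: already the diagonal term $\sum_{i\in H}\sum_j b_{ij}x_j^2/s$ is a sum of indicators scaled by up to $1/(\ell s)$ whose total mean, measured in units of the maximum summand, is only $\Theta(\eps|H|\ell)=O(\sqrt{\eps}\lg n/\lg(1/\eps))$; a Chernoff bound (and a matching lower-bound scenario with $\Theta(\ell)$ tail entries of squared value $\Theta(1/\ell)$) gives deviation probability $n^{-\Theta(\sqrt{\eps}/\lg(1/\eps))}$, far larger than $n^{-2}$ for small $\eps$. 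The same issue defeats your claim that $|H|\le B$ for $B=o(\eps s/\ell)$ with probability $1-1/(2n^2)$: the achievable high-probability bound is $|H|=O(\lg n/\lg(1/\eps))$, which is $\ell$ times \emph{larger} than $\eps s/\ell$. The paper avoids both problems by never bounding $\|(Aw)_{|H}\|$ itself: it bounds the \emph{incidence count} $\sum_{(i,j)\in R}b_{ij}$ over the doubly-heavy set by $c\lg n/\lg(1/\eps)$ via a multiplicative Chernoff bound (the mean $O(\eps^{1/4}\lg n/\lg(1/\eps))$ sits a factor $\eps^{-1/4}$ below the target, yielding $(e\eps^{1/4})^{c\lg n/\lg(1/\eps)}\le n^{-2}$), and each such term contributes at most $1/s$, giving $O(\eps)$ deterministically. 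You would need to restructure your heavy case along these lines for the argument to close.
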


%
Recall that for a choice of $m$ and $s$, the construction works by grouping the rows of $A$ into $s$ blocks of $m/s$ consecutive rows each, $[1,m/s], [m/s+1,2m/s]$ and so on. For every column, a uniform random entry in each block is chosen together with an independent uniform sign $\sigma$. That entry is then set to $\sigma/\sqrt{s}$. Each column thus has exactly one non-zero per block of rows.

The rest of this section is devoted to the proof of Claim~\ref{c:cross}.
We start by proving that $x_{head(\ell)}$ often has a desirable property. Concretely, we define the following
\begin{definition}
A set $J \in \binom{[d]}{\ell}$ of $\ell$ columns of $A$
is \emph{well-behaved} if there are no more than $6\lg n/\lg(1/\eps)$ rows $i \in [m]$ such that $|\{j \in J : a_{ij} \neq 0\}| \geq 6$.
\end{definition}
\begin{claim}
  \label{c:oftenwell}
Let $A$ be sampled as in Kane and Nelson \cite{KN14} with $m = O(\eps^{-2}\lg n)$ and $s \leq \eps m$. Then for every 
set $J \in \binom{[d]}{\ell}$ of $\ell$ columns 
of $A$, it holds with probability at least $1-n^{-3}$ that $J$ is well-behaved.
\end{claim}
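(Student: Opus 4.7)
My plan is to bound $\Pr[W \ge T]$ directly by a union bound over combinatorial witnesses, where $W := |\{i \in [m] : |\{j \in J : a_{ij} \ne 0\}| \ge 6\}|$ and $T := 6\lg n/\lg(1/\eps)$. The key structural fact about the Kane--Nelson construction is that each of the $\ell$ columns of $A$ indexed by $J$ independently picks exactly one uniform row in each of the $s$ blocks of $m/s$ rows. Consequently, for any fixed row $i \in [m]$ and column $j \in J$, $\Pr[a_{ij}\ne 0]=s/m$, and the events $\{a_{ij}\ne 0\}$ are mutually independent across distinct $j \in J$.

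If $W\ge T$, then there must exist a set $R\subseteq[m]$ of $T$ rows and, for every $i\in R$, a $6$-subset $C_i\subseteq J$ such that $a_{ij}\ne 0$ for all $i\in R$ and $j\in C_i$. Union-bounding over all such witnesses, and factoring the probability of a fixed witness using column-wise independence, yields
\[
  \Pr[W\ge T] \;\le\; \binom{m}{T}\binom{\ell}{6}^{T}(s/m)^{6T}.
\]
Witnesses that require a single column in $J$ to occupy two distinct rows inside the \emph{same} block are infeasible and contribute probability zero, so they are harmlessly over-counted. For valid witnesses, the constraints partition across columns into independent blocks, and each of the $6T$ elementary events holds with probability exactly $s/m$.

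Substituting $\binom{m}{T}\le(em/T)^T$, $\binom{\ell}{6}\le\ell^6/720$, $\ell=O(\eps^{-1/2})$, $m=O(\eps^{-2}\lg n)$ and $s/m\le\eps$, a short calculation gives
\[
  \frac{em\,\ell^{6}(s/m)^{6}}{720\,T} \;=\; O\!\left(\frac{\eps^{-2}\lg n \cdot \eps^{-3}\cdot \eps^{6}}{\lg n/\lg(1/\eps)}\right) \;=\; O(\eps\lg(1/\eps)),
\]
so $\Pr[W\ge T]\le\bigl(O(\eps\lg(1/\eps))\bigr)^{T}$. For $\eps$ below a sufficiently small absolute constant the base is at most $\eps^{1/2}$, whence $\Pr[W\ge T]\le\exp(-(T/2)\ln(1/\eps))=\exp(-3\ln n)=n^{-3}$, as required. (The case $\ell<6$ is trivial since then $W\equiv 0$.)

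The only point requiring care is choosing the constant in $T$ large enough that the logarithmic loss in the base $O(\eps\lg(1/\eps))$ is compensated after exponentiation to the $T$-th power; the constant $6$ is chosen precisely for this reason. No deeper concentration machinery (e.g.\ higher moments or negative association of multinomial counts) is needed, so I do not foresee a substantive obstacle beyond this bookkeeping and keeping the assumption $s\le\eps m$ and $\ell = O(\eps^{-1/2})$ visible throughout.
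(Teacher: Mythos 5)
Your proposal is correct and follows essentially the same route as the paper's proof: a union bound over witnesses consisting of a set of $T=6\lg n/\lg(1/\eps)$ rows together with a $6$-subset of $J$ per row, the same observation that witnesses forcing a column to hit two rows in one block have probability zero while all other witnesses factor into $6T$ independent events of probability $s/m$, and the same final calculation giving a base of $O(\eps\lg(1/\eps))\le \eps^{1/2}$ raised to the $T$-th power. No substantive differences.
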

\begin{proof}
  Let $J \in \binom{[d]}{\ell}$, and denote $\beta = 6\lg n/\lg(1/\eps)$. For every subset $I =\{i_1,\ldots,i_\beta\}\in \binom{[m]}{\beta}$ of $\beta$ rows and every sequence $V_1,\dots,V_\beta \in \binom{J}{6}$ of subsets of $J$ of size $6$ each, define the event ${\cal E}_{I,V_1,\ldots,V_\beta}$ to be the set of all matrices $A$ such that for every $i \in I$, for every $j \in V_i$, $a_{i,j} \ne 0$.

Note first that if the entries $\{a_{i,j}\}_{i \in I, j \in V_i}$ are not independent, then there must be two such entries in the same column and same subset of $m/s$ rows. In this case, $\Pr[{\cal E}_{I,V_1,\ldots,V_\beta}] = 0$ as at most one of them may be non-zero. Otherwise, all $6\beta$ entries of $A$ considered in ${\cal E}_{I,V_1,\ldots,V_\beta}$ are independent and therefore $\Pr[{\cal E}_{I,V_1,\ldots,V_\beta}] \le \left(\frac{s}{m}\right)^{6\beta}$. As $s \le \varepsilon m$ and $\ell \le \varepsilon^{-1/2}$, and by applying a union bound we get that
\[
\Pr[\text{$J$ is not well-behaved}] \le \sum_{I \in \binom{[m]}{\beta}}\sum_{V_1,\dots,V_\beta \in \binom{J}{6}}\Pr[{\cal E}_{I,V_1,\ldots,V_\beta}] \le \left(\frac{me}{\beta}\right)^\beta\cdot \ell^{6\beta} \varepsilon^{6\beta} \le (O(1)\varepsilon^{-2}\lg(1/\varepsilon))^\beta \cdot \varepsilon^{3\beta} \;.
\]
For $\varepsilon$ smaller than some constant, this is at most $\varepsilon^{\beta/2} \le n^{-3}$.
%
%
\end{proof}

Next we show that if the the support of $x_{head(\ell)}$ is a well-behaved subset of columns, then $Ax_{head(\ell)}$ has few "large" entries (note that $|\supp(x_{head(\ell)})|\le \ell$).
\begin{claim}
  \label{lem:nice}
Let $x \in X$ and assume the support of $x_{head(\ell)}$ is well-behaved. Then $Ax_{head(\ell)}$ has at most $6\lg n/\lg(1/\eps)$ entries that exceed $\sqrt{5/s}$. Furthermore, we have $\|Ax_{head(\ell)}\|_\infty \leq \sqrt{\ell/s}$.
\end{claim}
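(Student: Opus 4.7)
The plan is to bound $(Ax_{head(\ell)})_i$ row by row using only the combinatorial intersection pattern between the support of $x_{head(\ell)}$ and the nonzero pattern of row $i$ of $A$. Fix $x\in X$ and let $J=\supp(x_{head(\ell)})$, so $|J|\le\ell$ and the well-behaved hypothesis applies to $J$. For each row $i\in[m]$, let $k_i=|\{j\in J:a_{ij}\neq 0\}|$. Since the Kane--Nelson construction places each nonzero with magnitude exactly $1/\sqrt{s}$, we have
\[
(Ax_{head(\ell)})_i \;=\; \sum_{j\in J,\,a_{ij}\neq 0} a_{ij}\,x_j.
\]

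Now I would apply Cauchy--Schwarz to this sum, which has only $k_i$ nonzero terms. This gives
\[
\bigl|(Ax_{head(\ell)})_i\bigr|^2 \;\le\; k_i\!\!\sum_{j\in J,\,a_{ij}\neq 0}\! a_{ij}^2 x_j^2 \;=\; \frac{k_i}{s}\!\!\sum_{j\in J,\,a_{ij}\neq 0}\! x_j^2 \;\le\; \frac{k_i}{s}\,\|x_{head(\ell)}\|^2 \;\le\; \frac{k_i}{s},
\]
where the last inequality uses $\|x_{head(\ell)}\|\le\|x\|=1$. Hence $|(Ax_{head(\ell)})_i|\le \sqrt{k_i/s}$ in every row.

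For the first part of the claim, I would observe that whenever $k_i\le 5$ the above bound yields $|(Ax_{head(\ell)})_i|\le\sqrt{5/s}$, and by the definition of $J$ being well-behaved there are at most $6\lg n/\lg(1/\eps)$ rows $i$ with $k_i\ge 6$. Every other row therefore satisfies $|(Ax_{head(\ell)})_i|\le\sqrt{5/s}$, which is exactly the first statement. For the $\ell_\infty$ bound I would just use the trivial upper bound $k_i\le|J|\le\ell$, giving $|(Ax_{head(\ell)})_i|\le\sqrt{\ell/s}$ uniformly in $i$.

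There is essentially no obstacle here; the only subtle point is choosing Cauchy--Schwarz rather than the weaker triangle inequality $\sum|a_{ij}||x_j|\le k_i/\sqrt{s}$, since we really need the $\sqrt{k_i/s}$ rate to make the threshold $\sqrt{5/s}$ kick in at $k_i=5$ rather than $k_i=\sqrt{5}$. Everything else is a direct consequence of the well-behavedness hypothesis supplied by Claim~\ref{c:oftenwell}.
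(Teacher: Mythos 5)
Your proof is correct and follows essentially the same route as the paper's: isolate the at most $6\lg n/\lg(1/\eps)$ rows meeting the head's support in $\ge 6$ positions (via well-behavedness), apply Cauchy--Schwarz on the remaining rows to get $\sqrt{5/s}$, and use $|J|\le\ell$ with $\|x_{head(\ell)}\|\le 1$ for the $\ell_\infty$ bound. The only difference is that you spell out the Cauchy--Schwarz step that the paper leaves implicit.
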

\begin{proof}
Let $I \subseteq [m]$ be the set of rows $i \in [m]$ for which $|\{j \in \supp(x_{head(\ell)}) : a_{ij}\ne 0\}| \ge 6$.
Consider an $i \in [m] \setminus I$. The number of columns $j \in [d]$ such that $a_{ij}\ne 0$ is at most $5$ and for each of these we have $|a_{ij}| \le 1/\sqrt{s}$. Therefore since $\|x_{head(\ell)}\| \le 1$ we get that $|(Ax_{head(\ell)})_i| \le \sqrt{5/s}$.
Since $\supp(x_{head(\ell)})$ is well-behaved, then $|I| \le 6\lg n/\lg(1/\eps)$, and the claim follows. 
The bound $\|Ax_{head(\ell)}\|_\infty \leq \sqrt{\ell/s}$ follows simply from the support of $x_{head(\ell)}$ only having cardinality $\ell$ and $\|x_{head(\ell)}\|\leq 1$.
\end{proof}


For every $x \in X$, let ${\cal E}_{3,x}$ be the set of matrices $A$ where $\|Ax_{head(\ell)}\|^2 \geq 2$ or $|\langle Ax_{head(\ell)},Ax_{tail(\ell)} \rangle| < \eps$. Then ${\cal E}_3 = \cap_{x \in X} {\cal E}_{3,x}$.
Our goal is to show that $\Pr[{\cal E}_{3,x}] \geq 1-O(1/n^2)$ which by a union bound over all $x \in X$ completes the proof of Claim~\ref{c:cross}.
To this end, define ${\cal W}_x$ to be the set of all matrices $A$ for which the support of $x_{head(\ell)}$ is a well-behaved set of columns. Claim~\ref{c:oftenwell} implies that $\Pr[{\cal W}_x] \ge 1-n^{-3}$.
It is therefore enough to show that $\Pr[{\cal E}_{3,x} \mid {\cal W}_{x}]\ge 1-O(n^{-2})$, as $\Pr[{\cal E}_{3,x}] \ge \Pr[{\cal E}_{3,x} \mid {\cal W}_{x}]\Pr[{\cal W}_{x}]$.
The following lemma thus concludes the proof of Claim~\ref{c:cross}, and the rest of this section is devoted to its proof.

\begin{lemma}
\label{l:smallCrossCondWell}
$\Pr[{\cal E}_{3,x} \mid {\cal W}_{x}]\ge 1-O(n^{-2})$.
\end{lemma}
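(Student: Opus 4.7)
The plan is to first condition on the columns of $A$ indexed by $\supp(x_{head(\ell)})$, which fixes the random vector $y := Ax_{head(\ell)}$ and determines whether ${\cal W}_x$ occurs, while leaving the columns indexed by $\supp(x_{tail(\ell)})$ independent and still Kane--Nelson distributed. It suffices to show that, on ${\cal W}_x$ and under the additional assumption $\|y\|^2 < 2$ (otherwise ${\cal E}_{3,x}$ holds automatically), $|\langle y, Ax_{tail(\ell)}\rangle| < \varepsilon$ with conditional probability at least $1 - O(n^{-2})$. By Claim~\ref{lem:nice}, two properties of $y$ are in force: $\|y\|_\infty \le \sqrt{\ell/s}$ globally, and the ``heavy'' row set $L := \{i \in [m] : |y_i| > \sqrt{5/s}\}$ has $|L| \le B := 6\lg n/\lg(1/\varepsilon)$.

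Using the Kane--Nelson parameterization, the cross term equals
\[
\langle y, Ax_{tail(\ell)}\rangle \;=\; \frac{1}{\sqrt{s}} \sum_{k=1}^{s}\sum_{j \in \supp(x_{tail(\ell)})} x_j\, y_{r_{k,j}}\, \sigma_{k,j},
\]
where $r_{k,j}$ is the uniform random row in block $k$ chosen by column $j$ and $\sigma_{k,j} \in \{\pm 1\}$ is its independent random sign. I decompose $y = y^{(L)} + y^{(S)}$ into its restrictions to $L$ and $[m]\setminus L$, and bound the two resulting pieces $\langle y^{(L)}, Ax_{tail(\ell)}\rangle$ and $\langle y^{(S)}, Ax_{tail(\ell)}\rangle$ separately by $\varepsilon/2$ with failure probability $O(n^{-2})$.

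For the small-$y$ piece, I will use the moment method. Conditional on all positions $\{r_{k,j}\}$, Khintchine's inequality applied to the Rademachers $\{\sigma_{k,j}\}$ gives $\E_\sigma[\langle y^{(S)}, Ax_{tail(\ell)}\rangle^{2p}] \leq (Cp)^p V^p$, where $V := \frac{1}{s}\sum_{k,j} x_j^2\, (y^{(S)}_{r_{k,j}})^2$. Its mean is $\E V = \|x_{tail(\ell)}\|^2 \|y^{(S)}\|^2/m \le 2/m$, which is small because $m = \Theta(\varepsilon^{-2}\lg n)$, and $V$ is a sum of independent contributions (indexed by $j$) that are uniformly bounded thanks to $\|y^{(S)}\|_\infty \le \sqrt{5/s}$; a separate high-moment bound on $V$ then yields $\E[V^p] \le (C'/m)^p$ for a suitable constant. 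Setting $p = \Theta(\lg n)$ and applying Markov completes this case. For the large-$y$ piece, $y^{(L)}$ is supported on only $|L| \le B$ rows, so the signed sum reduces to the terms with $r_{k,j} \in L$, whose count has mean $|L|\cdot |\supp(x_{tail(\ell)})|\cdot s/m$. A Hoeffding bound in $\sigma_{k,j}$ combined with a concentration argument on the number of such terms, and, if needed, a further partition of $\supp(x_{tail(\ell)})$ into heavy coordinates $\{j : |x_j| > \tau\}$ (of size at most $1/\tau^2$ by unit norm) and light coordinates (on which the per-summand magnitude is controlled by $\tau$), then yields the desired bound.

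The main obstacle lies in the moment calculation for $\langle y^{(S)}, Ax_{tail(\ell)}\rangle$: a direct Bernstein bound on $\sum_j x_j (A^T y^{(S)})_j$ is too weak because $|(A^T y^{(S)})_j|$ can be as large as $\Theta(1)$ while its standard deviation is only $O(\|y^{(S)}\|/\sqrt{m})$. Getting the correct tail requires exploiting the sign cancellations \emph{inside} each $(A^T y^{(S)})_j$ via Khintchine, and then carefully controlling the high moments of $V$ using the $\ell_\infty$ bound on $y^{(S)}$. Threading all the parameters so that the sparsity and target-dimension budgets from Theorem~\ref{th:upperBoundMain} simultaneously accommodate the constraints arising from ${\cal E}_1$, ${\cal E}_2$, and ${\cal E}_3$ is the principal bookkeeping challenge.
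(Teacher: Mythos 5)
Your overall approach is essentially the paper's: the same conditioning on the head columns (fixing $u=Ax_{head(\ell)}$, invoking Claim~\ref{lem:nice}, and discarding the case $\|u\|^2\ge 2$), the same reduction to the randomness of the tail columns, and the same split of the cross term into a ``few potentially large terms'' part handled by counting nonzero entries and a ``many small terms'' part handled by sign cancellation plus concentration of the sum of squared coefficients. Your Khintchine-plus-$2p$-th-moment formulation is equivalent to the paper's Hoeffding-conditioned-on-$\|w\|^2$ formulation, and your further heavy/light split of $\supp(x_{tail(\ell)})$ inside the large-$y$ piece reconstructs exactly the paper's set $R$ of pairs $(i,j)$ with both $|u_i|>\sqrt{5/s}$ and $|x_j|>1/(\sqrt{\ell}\lg^2(1/\eps))$; the leftover pairs with large $|y_i|$ but light $|x_j|$ land back in the sign-cancellation bucket, just as in the paper's set $S$.

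One step, as written, would fail. You propose to bound $\E[V^p]$ by viewing $V=\frac1s\sum_{k,j}x_j^2\bigl(y^{(S)}_{r_{k,j}}\bigr)^2$ as a sum of independent contributions \emph{indexed by $j$}, uniformly bounded via $\|y^{(S)}\|_\infty\le\sqrt{5/s}$. Grouped by column, the $j$-th contribution is only bounded by $\frac{x_j^2}{s}\cdot s\cdot\frac{5}{s}=5x_j^2/s\le 5/(s\ell)$, and a Rosenthal/Chernoff-type bound for a sum of independent nonnegative variables gives $\E[V^p]^{1/p}\lesssim \E V+p\max_j b_j$. With $p=\Theta(\lg n)$ you would need $\lg n/(s\ell)=O(1/m)$, i.e.\ $s\ell=\Omega(m\lg n)$; but the construction has $s\le\eps m$ and $\ell\le\eps^{-1/2}+1$, so $s\ell=O(\eps^{1/2}m)\ll m\lg n$, and the per-column bound is hopeless no matter how the parameters are tuned. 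The fix is to use independence at the finer granularity of block--column pairs $(k,j)$: within one column the $s$ block choices $r_{1,j},\dots,r_{s,j}$ are mutually independent, so $V$ is a sum of $s\cdot|\supp(x_{tail(\ell)})|$ independent terms, each bounded by $5x_j^2/s^2\le 5/(s^2\ell)=O(\eps^2/\lg^2 n)$ (using $s\sqrt{\ell}=\Omega(\eps^{-1}\lg n)$, which the stated choices of $s$ and $\ell$ do guarantee). Then $\E V+p\max b=O(1/m)$ and your moment bound goes through; this is precisely the granularity at which the paper works, via its Bernoulli variables $b_{ij}$ and Observation~\ref{obs:withWithoutReplacementCoeff}. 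With that correction your argument is sound.
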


Since $\Pr[{\cal E}_{3,x} \mid {\cal W}_{x}\wedge\|Ax_{head(\ell)}\|^2 \ge 2]=1$ it is enough to bound $\Pr[{\cal E}_{3,x} \mid {\cal W}_{x}\wedge\|Ax_{head(\ell)}\|^2 < 2]$.
%
Note that by disjointness of the support of $x_{head(\ell)}$ and $x_{tail(\ell)}$, the vectors $Ax_{head(\ell)}$ and $Ax_{tail(\ell)}$ are independent. In fact, $Ax_{tail(\ell)}$ is completely independent of all columns of $A$ in the support of $x_{head(\ell)}$. 
%
We will therefore show that conditioned on ${\cal W}_x \wedge\|Ax_{head(\ell)}\|^2 < 2$, $|\langle Ax_{head(\ell)},Ax_{tail(\ell)} \rangle| = O(\eps)$ with probability at least $1-O(1/n^2)$ over the choice of the random columns in the support of $x_{tail(\ell)}$.
We can therefore condition on some outcome of $u = Ax_{head(\ell)}$ where $\supp(x_{head(\ell)})$ is also well-behaved.

For every $i \in [m]$ and $j \in \supp(x_{tail(\ell)})$ define $b_{ij}$ as the Bernoulli random variable taking the value $1$ if entry $(i,j)$ of $A$ is non-zero and $0$ otherwise. In addition, let $\sigma_{ij}$ denote uniform random and independent signs.
Then $\langle u, Ax_{tail(\ell)}\rangle = \sum_{i=1}^m u_i \sum_{j\in\supp(x_{tail(\ell)})} b_{ij} \sigma_{ij} x_j/\sqrt{s}$. To bound the sum, we split it into two sums, and bound the probabilities of each part being at most $O(\varepsilon)$. Denote 
\begin{equation*}
\begin{split}
R &= \{(i,j) \in [m]\times\supp(x_{tail(\ell)}) : |u_i| > \sqrt{5/s} \; and \; |x|_j > 1/(\sqrt{\ell}\lg^2(1/\varepsilon))\}\;, \; and \\
S &= ([m]\times\supp(x_{tail(\ell)})) \setminus R
\end{split}
\end{equation*}


%

%
\begin{claim}\label{c:crossLargeTerms} 
$\Pr\left[\left|\sum\limits_{(i,j)\in R}u_i \cdot b_{ij} \sigma_{ij} x_j/\sqrt{s}\right| \le O(\varepsilon) \right] \ge 1-n^{-2}$.
\end{claim}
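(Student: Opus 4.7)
The plan is to bound the sum $Z_R := (1/\sqrt{s})\sum_{(i,j)\in R} u_i x_j b_{ij}\sigma_{ij}$ by combining structural bounds on $R$ with a Rademacher-type concentration argument conditional on the Bernoullis $b_{ij}$. First, I would gather the structural information about $R$. Claim~\ref{lem:nice} yields $\|u\|_\infty \le \sqrt{\ell/s}$, and the set of heavy rows $I_R := \{i : |u_i| > \sqrt{5/s}\}$ satisfies $|I_R| \le 6\lg n/\lg(1/\varepsilon)$. Since $\|x\|=1$ and $|x_j|\le 1/\sqrt{\ell}$ for every $j\in\supp(x_{tail(\ell)})$, a simple counting argument gives $|J_R| \le \ell\lg^4(1/\varepsilon)$, where $J_R := \{j\in\supp(x_{tail(\ell)}) : |x_j| > 1/(\sqrt{\ell}\lg^2(1/\varepsilon))\}$. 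These bounds imply that every coefficient $c_{ij} := u_i x_j/\sqrt{s}$ in the sum satisfies $|c_{ij}|\le 1/s$, and $\sum_{(i,j)\in R} c_{ij}^2 \le (1/s)\|u\|^2\|x\|^2 \le 2/s$.

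Second, I would condition on the Bernoullis $b=(b_{ij})$ and exploit the independence of the signs $\sigma_{ij}$. By Khintchine's inequality, for every positive integer $p$,
\[ \mathbb{E}_\sigma[Z_R^{2p} \mid b] \le (Cp)^p \, V(b)^p, \qquad V(b) := \sum_{(i,j)\in R} c_{ij}^2 b_{ij}, \]
so by Markov's inequality,
\[ \Pr[|Z_R| > t] \le \left(\frac{Cp}{t^2}\right)^p \mathbb{E}_b[V(b)^p]. \]
Choosing $p = \Theta(\lg n)$ and $t = O(\varepsilon)$, it suffices to show $\mathbb{E}_b[V(b)^p]^{1/p} = O(\varepsilon^2/\lg n)$, which matches the trivial bound $\mathbb{E}_b[V(b)] = (s/m)\sum_{(i,j)\in R}c_{ij}^2 \le 2/m$ up to constants.

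Third, I would bound the moments of $V(b)$. Decompose $V(b) = \sum_{j\in J_R} (x_j^2/s)\sum_{i\in I_R} u_i^2 b_{ij}$: this is a sum of random variables independent across columns $j$, and within each column the nonzero positions are chosen independently across the $s$ blocks of $m/s$ rows each. Each per-column contribution is bounded (by $2/(s\ell)$) and has mean $O(x_j^2/m)$. Combining this with a Bennett/Rosenthal-type moment inequality for sums of independent non-negative bounded random variables, and then further using the block-wise independence within each column, should allow control of $\mathbb{E}_b[V(b)^p]$ at moment order $p = \Theta(\lg n)$. Combined with the Khintchine bound above, this yields the desired $\Pr[|Z_R| > O(\varepsilon)] \le 1/n^2$.

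The main obstacle is obtaining a moment bound on $V(b)$ sharp enough to match the target probability. A direct column-wise Bernstein bound only gives a deviation exponent of order $\varepsilon^{1/2}/\lg(1/\varepsilon)$ around the mean, much weaker than the required $\lg n$. The decisive structural observation is that the heavy rows $I_R$ intersect only $O(|I_R|) = O(\lg n/\lg(1/\varepsilon))$ blocks, so the fluctuations of $V(b)$ are concentrated in these few ``active'' blocks. Making this rigorous will likely require a careful block-wise decomposition of $V(b)$, combined with a moment-generating-function argument that uses both column-wise and block-wise independence, to push the tail of $V(b)$ below $1/n^2$ at the scale $O(\varepsilon^2/\lg n)$.
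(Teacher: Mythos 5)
Your structural observations are all correct and match the paper's setup: $\|u\|_\infty \le \sqrt{\ell/s}$, $|I_R| \le 6\lg n/\lg(1/\eps)$, $|J_R| \le \ell\lg^4(1/\eps)$, and $|c_{ij}| \le 1/s$ on $R$. But the heart of your argument --- conditioning on the $b_{ij}$, applying Khintchine to the signs, and then controlling $V(b)=\sum_{(i,j)\in R}c_{ij}^2 b_{ij}$ at scale $O(\eps^2/\lg n)$ with failure probability $n^{-2}$ --- is exactly the step you flag as an obstacle and do not carry out, and it is not clear it can be carried out. Even under the most favorable outcome one could hope for (namely that the number of nonzero $b_{ij}$ in $R$ is $O(\lg n/\lg(1/\eps))$, which is what a Chernoff bound on the count actually delivers), you only get $V(b) \le (1/s^2)\cdot O(\lg n/\lg(1/\eps)) = O(\eps^2 \lg(1/\eps)/\lg n)$, and Hoeffding then yields a tail of $\exp(-\Theta(\lg n/\lg(1/\eps)))$, which falls short of $n^{-2}$ by a $\lg(1/\eps)$ factor in the exponent. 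This is precisely why the paper splits the cross term into $R$ and $S$: the Khintchine-plus-variance route is used only on $S$, where the coefficients are genuinely $O(\eps/\lg n)$ so that $V=O(\eps^2/\lg n)$ holds; on $R$ the coefficients are too large for this route.

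The paper's proof of this claim discards the signs entirely. Since every coefficient on $R$ has magnitude at most $1/s$, one has $|Z_R| \le (1/s)\sum_{(i,j)\in R} b_{ij}$ deterministically, so it suffices to show the integer count $\sum_{(i,j)\in R} b_{ij}$ is at most $c\lg n/\lg(1/\eps) = O(\eps s)$ with probability $1-n^{-2}$. Because $|R| \le 6\ell\lg n\lg^3(1/\eps)$ and $s/m\le\eps$, the count has mean at most $\eps^{1/4}\lg n/\lg(1/\eps)$, i.e.\ a factor $\eps^{-1/4}$ below the threshold; a multiplicative Chernoff bound (using Observation~\ref{obs:withWithoutReplacement} to factor the MGF despite the within-block dependence) then gives failure probability $(e\eps^{1/4})^{c\lg n/\lg(1/\eps)} \le n^{-2}$. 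The crucial point your proposal misses is that for $R$ the $L^1$ bound is already strong enough --- few terms, each of size $1/s$ --- so no cancellation from the signs is needed, whereas trying to extract cancellation forces you into the moment bound on $V(b)$ that does not close. To repair your write-up you would essentially have to abandon the Khintchine step on $R$ and replace it with this counting argument.
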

\begin{proof}
Recall that $\|u\|_\infty \le \sqrt{\ell/s}$ and $\|x_{tail(\ell)}\|_\infty \leq 1/\sqrt{\ell}$.
Therefore 
\[
\left|\sum_{(i,j)\in R}u_i \cdot b_{ij} \sigma_{ij} x_j/\sqrt{s}\right|\le \frac{1}{\sqrt{s}}\sum_{(i,j)\in R}|u_i| \cdot b_{ij} |\sigma_{ij} x_j| \le \frac{1}{s}\sum_{(i,j)\in R}b_{ij} \;.
\]
To complete the proof we will show that with probability at least $1 - n^{-2}$ it holds that $\sum_{(i,j)\in R}b_{ij} \le O(s\varepsilon) \le c\lg n/\lg(1/\varepsilon)$.
Since $\supp(x_{head(\ell)})$ is well-behaved, there are at most $6\lg n/\lg(1/\eps)$ rows $i \in [m]$ for which $|u_i| > \sqrt{5/s}$ and since $\|x_{tail(\ell)}\|=1$ there are at most $\ell \lg^4(1/\eps)$ columns $j \in \supp(x_{tail(\ell)})$ such that $|x_j| \ge 1/(\sqrt{\ell}\lg^2(1/\varepsilon))$. Thus $|R| \le 6\ell \lg n \lg^3(1/\eps)$, and therefore $\mu : = \mathbb{E}\left[\sum_{(i,j)\in R}b_{ij}\right] \le (s/m) \cdot 6\ell \lg n \lg^3(1/\eps) \le \varepsilon^{1/2}\lg n \lg^3(1/\eps)$, where the last inequality follows from the fact that $s \le \varepsilon m$ and $\ell \le \varepsilon^{-1/2}$. For $\varepsilon$ smaller than some constant we get that the expectation is at most $\mu \le \varepsilon^{1/4}\lg n / \lg(1/\eps)$. Straightforward calculations give the following observation, whose proof is deferred to the appendix.
\begin{observation}
\label{obs:withWithoutReplacement}
For every $t>0$, $\mathbb{E}\left[\exp\left(t\sum_{(i,j)\in R}b_{ij}\right)\right] \le \prod_{(i,j)\in R}{\mathbb{E}\left[\exp\left(tb_{ij}\right)\right]}$.
\end{observation}
Employing Observation~\ref{obs:withWithoutReplacement} we can apply Hoeffding-like inequalities on the probability that $\sum_{(i,j)\in R}b_{ij}$ is large. Specifically for a large enough constant $c$ let $\delta = c \varepsilon^{-1/4}-1$ and $t=\ln(1+\delta)$ we get from Markov's inequality that
\begin{equation*}
\Pr\left[\sum_{(i,j)\in R}b_{ij} > \frac{c \lg n}{\lg (1/\varepsilon)}\right] = \Pr\left[\exp\left(t\sum_{(i,j)\in R}b_{ij}\right) > \exp\left(\frac{tc \lg n}{\lg (1/\varepsilon)}\right)\right] \le \frac{e^{\delta\mu}}{(1+\delta)^{c \lg n / \lg(1/\varepsilon)}} \;.
\end{equation*}
As $(1+\delta) = c\varepsilon^{-1/4}$ and $\mu \le \varepsilon^{1/4}\lg n / \lg(1/\eps)$ we get that if $c$ is large enough
\begin{equation*}
\Pr\left[\sum_{(i,j)\in R}b_{ij} > \frac{c \lg n}{\lg (1/\varepsilon)}\right] \le \left(e\varepsilon^{1/4}\right)^{c \lg n / \lg(1/\varepsilon)} \le n^{-2}\;.
\end{equation*}
\end{proof}

\begin{claim}\label{c:crossSmallTerms}
$\Pr\left[\left|\sum\limits_{(i,j)\in S}u_i \cdot b_{ij} \sigma_{ij} x_j/\sqrt{s}\right| \le O(\varepsilon) \right] \ge 1-O(n^{-2})$.
\end{claim}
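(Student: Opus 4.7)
The plan is to bound
\[
Z_S = \frac{1}{\sqrt{s}}\sum_{(i,j) \in S} u_i \sigma_{ij} b_{ij} v_j,\qquad v := x_{tail(\ell)},
\]
in two stages: first condition on the outcome of $u = Ax_{head(\ell)}$ (with $\supp(x_{head(\ell)})$ well-behaved, as already provided by the overall conditioning on $\mathcal{W}_x$) and on the support pattern $\{b_{ij}\}_{j\in\supp(v)}$ of the $v$-columns of $A$, then exploit independence of the random signs $\sigma_{ij}$. Since $\supp(x_{head(\ell)})\cap\supp(v)=\emptyset$, the two sources of randomness are independent. Conditioned as above, $Z_S$ is a Rademacher sum in the $\sigma_{ij}$, and Hoeffding's inequality gives
\[
\Pr\!\bigl[\,|Z_S| > C\varepsilon \mid u,\{b_{ij}\}\bigr] \le 2\exp\!\left(-\frac{C^2\varepsilon^2 s}{2W}\right),\qquad W := \sum_{(i,j)\in S} u_i^2 b_{ij} v_j^2.
\]
It therefore suffices to show $W = O(s\varepsilon^2/\lg n)$ with probability $\ge 1 - O(n^{-2})$ over the random $b_{ij}$.

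To control $W$ we split $S = S_1 \cup S_2$ with $S_1 := \{(i,j)\in S : |u_i|\le\sqrt{5/s}\}$, noting that on $S_2$ one necessarily has $|v_j|\le 1/(\sqrt{\ell}\lg^2(1/\varepsilon))$ by the definition of $S$. Using the Kane--Nelson block structure---for each $j\in\supp(v)$ and block $k\in[s]$ let $i_{kj}$ denote the unique nonzero row of column $j$ in block $k$---the contributions $u_{i_{kj}}^2 v_j^2\mathbf{1}[(i_{kj},j)\in S]$ to $W$ are mutually independent across $(k,j)$. Short expectation computations using $\|u\|^2\le 2$ and $\|v\|\le 1$ give $\E[W^{(1)}]\le 2s/m$; for $W^{(2)}$ the well-behavedness of $\supp(x_{head(\ell)})$ restricts $\|u_{\text{large}}\|^2$ to at most $(6\lg n/\lg(1/\varepsilon))\cdot(\ell/s)$ via Claim~\ref{lem:nice}, yielding $\E[W^{(2)}]=O(\varepsilon^2\ell/\lg(1/\varepsilon))$. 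Both bounds are $O(s\varepsilon^2/\lg n)$ under the parameter settings of Theorem~\ref{th:upperBoundMain}.

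The deviation $W-\E[W]$ will be controlled by Bernstein's inequality applied separately to $W^{(1)}$ and $W^{(2)}$. Each summand of $W^{(1)}$ is at most $5v_j^2/s \le 5/(s\ell)$ with total variance $O(1/(m\ell))$, and each summand of $W^{(2)}$ is at most $1/(s\lg^4(1/\varepsilon))$ with a comparable variance bound. Bernstein then produces a failure probability of $\exp(-\Omega(s^2\ell\varepsilon^2/\lg n))$, which drops below $n^{-2}$ precisely when $s^2\ell\varepsilon^2 = \Omega(\lg^2 n)$. A short case analysis distinguishing $\ell = \varepsilon^{-1/2}$ from $\ell = (\lg n/\lg d)^{2/3}$ verifies this inequality whenever $s = \Omega(\varepsilon^{-1}(\lg n/\lg(1/\varepsilon)+\lg^{2/3}n\lg^{1/3}d))$, completing the plan.

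The main obstacle is the Bernstein step: with only the naive per-summand bound $M = O(1/s)$ (i.e., no $S_1/S_2$ split) one would need $s = \Omega(\lg n/\varepsilon)$, precisely the Kane--Nelson regime we aim to improve upon. The two-level split $[m]\times\supp(v) = R \cup S$ together with $S = S_1 \cup S_2$ is engineered exactly to buy the missing factors in $M$: restricting to $|u_i|\le\sqrt{5/s}$ on $S_1$ gains an extra factor of $\ell$, while restricting to small $|v_j|$ on $S_2$ gains a factor of $\lg^4(1/\varepsilon)$. Together these gains are what make Bernstein close under the sparser sparsity asserted in Theorem~\ref{th:upperBoundMain}.
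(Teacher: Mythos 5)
Your proposal is correct and follows essentially the same route as the paper: condition on $u$ and the support pattern, apply Hoeffding over the independent signs, and then show that the quadratic form $\sum_{(i,j)\in S}u_i^2 v_j^2 b_{ij}$ concentrates around its mean $O(s\varepsilon^2/\lg n)$ by exploiting that the definition of $S$ bounds each summand (via the two cases $|u_i|\le\sqrt{5/s}$ or $|v_j|\le 1/(\sqrt{\ell}\lg^2(1/\varepsilon))$). The only difference is cosmetic: you invoke Bernstein together with explicit independence across Kane--Nelson blocks, whereas the paper runs a multiplicative Chernoff bound after dominating the moment generating function of the negatively associated $b_{ij}$ (Observation~\ref{obs:withWithoutReplacementCoeff}); both yield the required $n^{-2}$ failure probability under the stated parameters.
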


\begin{proof}
We first note that the sum can be thought of as an inner product between two vectors indexd by $(i,j)\in S$. Specifically let $\sigma, w \in \mathbb{R}^S$ be defined as follows. For every $(i,j)\in S$, $\sigma_{(i,j)} = \sigma_{ij}$ and $w_{(i,j)} = c_{(i,j)}b_{ij}$, where $c_{(i,j)}= u_ix_j/\sqrt{s}$. As $\sigma$ and $w$ are independent, we get from Hoeffding's inequality that for every $c>0$
\[
\Pr\left[|\langle w, \sigma\rangle | > c\varepsilon \mid \|w\| \right] \leq 2\exp\left(-\frac{(c\varepsilon)^2}{2\|w\|^2}\right) \;.
\]
Therefore it is enough to show that with probability at least $1-O(n^{-2})$ it holds that $\|w\|^2 = O(\varepsilon^2/\lg n)$. Note first that
\[
\mathbb{E}\left[\|w\|^2\right] = \mathbb{E}\left[\sum_{(i,j)\in S}c_{(i,j)}^2b_{ij}^2\right]= \frac{1}{s}\sum_{(i,j)\in S}u_i^2x_j^2\mathbb{E}[b_{ij}] = \frac{1}{m}\sum_{(i,j)\in S}u_i^2x_j^2 = \frac{1}{m}\|x_{tail(\ell)}\|^2\|u\|^2 \;.
\]
Since we conditioned on $\|u\|^2 < 2$, and since $\|x_{tail(\ell)}\|^2 \le 1$ we have that $\mathbb{E}\left[\|w\|^2\right] \le 2/m = O(\varepsilon^2/\lg n)$.
Our goal is therefore to bound $Pr[\|w\|_2 > (1+\delta)\mathbb{E}[\|w\|^2]]$ for some constant $\delta>0$.
Similarly to the previous proof we employ the following observation, whose proof is deferred to the appendix.
\begin{observation}
\label{obs:withWithoutReplacementCoeff}
For every $t>0$, $\mathbb{E}\left[\exp\left(t\sum_{(i,j)\in S}c_{(i,j)}^2b_{ij}\right)\right] \le \prod_{(i,j)\in S}{\mathbb{E}\left[\exp\left(tc_{(i,j)}^2b_{ij}\right)\right]}$.
\end{observation}
We start by bounding the coefficients $c_{(i,j)}$. Recall that $\|u\|_\infty \le \sqrt{\ell/s}$ and $\|x_{tail(\ell)}\|_\infty \le 1/\sqrt{\ell}$, and let $(i,j)\in S$. Then either $|u_i| \le \sqrt{5/s}$ or $|x_j|\le 1/(\sqrt{\ell}\lg^2(1/\varepsilon))$. In the former case $|u_ix_j/\sqrt{s}| \le \sqrt{5}/(s\sqrt{\ell})$, and by the choice of $s$ and $\ell$ we get $|u_ix_j/\sqrt{s}| \le O(\varepsilon/\lg n)$. In the latter case $|u_ix_j/\sqrt{s}| \le 1/s\lg(1/\varepsilon) = O(\varepsilon/\lg n)$. We conclude that for all $(i,j)\in S$ we have $|c_{(i,j)}|=|u_ix_j/\sqrt{s}| \le O(\varepsilon/\lg n)$.
Let $\mu = \mathbb{E}[\|w\|^2]$, $\alpha = O((\varepsilon / \lg n)^2)$ and let $t=\ln(1+\delta)/\alpha$ for some large enough constant $\delta$, then we get from Markov's inequality that
\begin{equation}
\Pr\left[\|w\|^2 > (1+\delta)/m \right]  \le \frac{\mathbb{E}\left[\exp\left(t\sum\limits_{(i,j)\in S} c_{(i,j)}^2b_{ij}\right)\right]}{\exp(t(1+\delta)/m)}\le \frac{\prod\limits_{(i,j)\in S}\mathbb{E}\left[\exp\left(tc_{(i,j)}^2b_{ij}\right)\right]}{(1+\delta)^{(1+\delta)/(\alpha m)}}
\label{eq:Markov}
\end{equation}
Now note that for every $(i,j)\in S$ it holds that  
\[
\mathbb{E}\left[\exp\left(t c_{(i,j)}^2b_{ij}\right)\right] = \frac{s}{m}e^{t c_{(i,j)}^2} + \left(1-\frac{s}{m}\right) = 1 + \frac{s}{m} \left(e^{t c_{(i,j)}^2} - 1\right) = 1 + \frac{s}{m} \left((1+\delta)^{c_{(i,j)}^2/\alpha} - 1\right)
\]
Since $c_{(i,j)}^2 \le \alpha$, we get that $(1+\delta)^{c_{(i,j)}^2/\alpha} \le 1 + \delta c_{(i,j)}^2/\alpha$. Therefore 
\[
\mathbb{E}\left[\exp\left(t c_{(i,j)}^2b_{ij}\right)\right] \le 1 + \frac{sc_{(i,j)}^2\delta}{\alpha m} \le \exp\left(\frac{\delta}{\alpha} \cdot \frac{s c_{(i,j)}^2}{m}\right)\;.
\] Plugging into \eqref{eq:Markov} we get that
\begin{equation*}
\Pr\left[\|w\|^2 > (1+\delta)\mu \right] \le \frac{\prod\limits_{(i,j)\in S}\exp\left(\frac{\delta}{\alpha} \cdot \frac{s c_{(i,j)}^2}{m}\right)}{(1+\delta)^{(1+\delta)/(\alpha m)}} = \frac{e^{\delta\mu/\alpha}}{(1+\delta)^{(1+\delta)/(\alpha m)}} \le \left(\frac{e^{2\delta}}{(1+\delta)^{1+\delta}}\right)^{1/(\alpha m)}\;,
\end{equation*}
where the last inequality is due to the fact that $\mu \le 2/m$. As $2/(\alpha m) = \Omega (\lg n)$, then for a large enough constant $\delta$  the probability is at most $n^{-2}$
\end{proof}

\section{Sparsity Lower Bound}
In this section, we prove our lower bound result, Theorem~\ref{thm:lower}. Let $0 < \eps < 1/4$. We first define a hard set of input vectors in $\R^d$. Let $\ell = \lg n/\lg (ed/\ell)$. For every $\ell$-sized subset $S \subseteq [d]$ of coordinates, form the vector $x_S = \sum_{i \in S} e_i/\sqrt{\ell}$. The collection of these vectors, along with the $0$-vector and $e_1,\dots,e_d$, is our hard input instance $X$ of cardinality $|X| \leq \binom{d}{\ell} + 1 + d \leq (e d/\ell)^\ell + n \leq 2n$.

Assume that $A$ is an $m \times d$ matrix in which every column has at most $s$ non-zeros, and that $A$ satisfies $\|Au-Av\|^2 \in (1 \pm \eps)\|u-v\|^2$ for all $u,v \in X$. We also assume that $m =\Omega(\eps^{-2} \lg n)$ as such a lower bound on $m$ is already known. We prove a lower bound on $s$ from these assumptions. Throughout the proof, we assume $s \leq m/2$ as otherwise, we are already done.

Let $a_j$ denote the $j$'th column of $A$. We first observe that $\|a_j\|^2 \in (1 \pm \eps)$ for all $j$ since $\|a_j\|^2 = \|Ae_j\|^2 = \|Ae_j - A0\|^2 \in (1 \pm \eps)\|e_j - 0\|^2 = (1 \pm \eps)$.

Our next step is to identify a subset $T \subseteq [m]$, such that many of the columns of $A$ have large entries in $T$. For this, we prove the following lemma:

\begin{lemma}
  \label{lem:manyheavy}
  Let $v \in \R^m$ be a vector with at most $s \leq m/2$ non-zeros. For any $t \leq s/8$, there are at least $\min\{\binom{m-1}{t-1}, (s/(8t))^t\}$ distinct subsets $T \subseteq [m]$ of cardinality $|T|=t$ for which $\sum_{i \in T} v_i^2 \geq t\|v\|^2/(2s)$.
\end{lemma}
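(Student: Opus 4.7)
The plan is to normalize $\|v\|^2 = 1$ (the bound is scale-invariant) so the target threshold becomes $\tau := t/(2s)$, and then case-split on whether $v$ has a \emph{heavy singleton}. If some coordinate $i^*$ has $v_{i^*}^2 \geq \tau$, every $T$ containing $i^*$ automatically satisfies $\sum_{j \in T} v_j^2 \geq v_{i^*}^2 \geq \tau$, producing $\binom{m-1}{t-1}$ good subsets---the first term of the min.

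The bulk of the work is the complementary case where every $v_i^2 < \tau = t/(2s)$. Since $\sum v_i^2 = 1$, this forces $|\mathrm{supp}(v)| > 2s/t$. The plan here is to exploit the \emph{heavy set} $H := \{i : v_i^2 \geq 1/(2s)\}$. A mass-counting argument shows that entries outside $H$ contribute at most $s\cdot 1/(2s) = 1/2$ to $\|v\|^2$, so $H$ carries at least half the mass; combined with $v_i^2 < t/(2s)$ this forces $|H| > s/t$. Every $T \subseteq H$ with $|T|=t$ is good (its sum is $\geq t/(2s) = \tau$), yielding $\binom{|H|}{t} \geq \binom{s/t}{t} \geq (s/t^2)^t$ good subsets, which already meets or beats $(s/(8t))^t$ when $t \leq 8$.

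For larger $t$, the pure $H$-count need not reach $(s/(8t))^t$, so I plan to also count \emph{extensions} of a short good prefix. Let $k^* = \min\{k : v_1^2+\cdots+v_k^2 \geq \tau\}$ with entries sorted by magnitude; averaging the top $k$ of at most $s$ non-zeros gives $\sum_{i \leq k}v_i^2 \geq k/s$, so $k^* \leq \lceil t/2\rceil$. Every $T$ of size $t$ containing the top $k^*$ coordinates is good, producing $\binom{m-k^*}{t-k^*} \geq ((m-k^*)/(t-k^*))^{t-k^*}$ good subsets, and using $m \geq 2s$ and the bound on $k^*$ this comfortably exceeds $(s/(8t))^t$ in most regimes. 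When $|\mathrm{supp}(v)| \geq s/8$, an alternative is to apply Paley--Zygmund to $T$ drawn uniformly from $\binom{\mathrm{supp}(v)}{t}$: the second moment $\E[f(T)^2]$ can be bounded using $v_i^2 < \tau$ to give $\E[f(T)^2] = O((\E f)^2)$, so a constant fraction of such $T$ are good, yielding $\Omega(\binom{s/8}{t}) \geq (s/(8t))^t$.

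The main technical obstacle is that none of these three counts---``pure $H$'', ``extension'', or ``Paley--Zygmund inside $\mathrm{supp}(v)$''---dominates uniformly across all regimes of $t \leq s/8$ and $|\mathrm{supp}(v)| \in (2s/t, s]$. Each is strongest in its own regime, so the proof must stitch them together via a short case split (e.g.\ on whether $|\mathrm{supp}(v)| \geq s/8$), absorbing the constants from each into the factor $8$ appearing in the denominator of $(s/(8t))^t$.
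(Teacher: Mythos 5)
Your Case 1 (heavy singleton) and the mass-counting facts you cite ($|H|>s/t$, $k^*\le\lceil t/2\rceil$, the second-moment bound for Paley--Zygmund) are all correct, but the plan does not close: the three counting schemes do not cover the complementary case, and the proposed stitch (splitting on $|\mathrm{supp}(v)|\ge s/8$) is not the right dichotomy. Concretely, suppose $|\mathrm{supp}(v)|<s/8$, so Paley--Zygmund is switched off by your own case split, and suppose $t>8$ with $s/t$ large. The pure-$H$ count is only guaranteed to be $\binom{|H|}{t}$ with $|H|>s/t$, i.e.\ roughly $(s/t^2)^t$, which is far below $(s/(8t))^t$ once $t>8$ (and is $0$ or $1$ when $t\approx\sqrt{s}$). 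The extension count, using only your bound $k^*\le\lceil t/2\rceil$, is about $((m-k^*)/(t-k^*))^{t-k^*}\approx(s/t)^{t/2}$, and $(s/t)^{t/2}\ge(s/(8t))^t$ forces $s/t=O(1)$ (about $s/t\le 128$) --- so it does \emph{not} ``comfortably exceed $(s/(8t))^t$ in most regimes''; it fails precisely in the main regime $t\ll s$. One can try to rescue this by arguing that a large $k^*$ forces all but $O(k^*)$ entries to have square at most $\tau/k^*$, hence forces $|\mathrm{supp}(v)|\gtrsim s/\lg(s/t)$, pushing $v$ back into the Paley--Zygmund case; but that is an additional argument you have not made, and even it only closes the gap when $\lg(s/t)$ is below an absolute constant, leaving the regime $s/t\ge 2^{\Theta(1)}$ uncovered. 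The underlying problem is that each of your schemes is ``all-or-nothing'': $T$ entirely inside $H$, or a forced prefix plus arbitrary fill, or $T$ entirely inside $\mathrm{supp}(v)$. Vectors whose mass is spread across several magnitude scales with $2s/t<|\mathrm{supp}(v)|<s/8$ defeat all three simultaneously.

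The paper's proof avoids this by a mixed count that interpolates between your extremes: it buckets the coordinates dyadically by magnitude ($v_j^2\in[\|v\|^2 2^{i-1}/s,\|v\|^2 2^i/s)$ for $i=0,\dots,\lg t-1$), shows the buckets of size exceeding $s/(4t)$ carry a constant fraction of $\|v\|^2$, draws $t_i=\lceil 4t|V_i|/s\rceil$ coordinates from each such bucket (so the selected mass is automatically $\ge t\|v\|^2/(2s)$), and fills the remaining $t-\sum_i t_i$ slots arbitrarily from the $\ge m/2$ unused coordinates. Every slot, whether drawn from a bucket or from the filler, contributes a factor of at least $s/(8t)$ to the number of choices, giving $(s/(8t))^t$ uniformly over all magnitude profiles. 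If you want to keep your structure, you need to replace the ``pure $H$'' and ``extension'' schemes by such a proportional-sampling count; as written, the proposal has a genuine gap in the regime $t>8$, $s/t\gg 1$, $|\mathrm{supp}(v)|<s/8$.
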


We defer the proof to the end of the section and instead proceed with the lower bound argument.

Let $t$ be a parameter to be fixed. There are $d$ columns in $A$, which by Lemma~\ref{lem:manyheavy} and averaging among all $t$-sized subsets of $[m]$ implies that there is a $T$ with $|T|=t$ such that at least $d \min\{\binom{m-1}{t-1}, (s/(8t))^t\}/\binom{m}{t} \geq d \min\{t/m, (s/(8t))^t/(em/t)^t\} = d \min\{t/m, (s/(8e m))^t \}$ columns $a_j$ of $A$ satisfy $\sum_{i \in T} a_{i,j}^2 \geq (1-\eps)t/(2s) \geq t/(4s)$. Fix such a $T$ and let $A_T$ be the subset of columns satisfying the previous conditions for this $T$.

Let $\fnet$ be a $1/4$-net for the set of unit vectors in $\R^t$, i.e. for any $x \in \R^t$ with $\|x\|=1$, there is an $x' \in \fnet$ with $\|x-x'\| \leq 1/4$ and $\|x'\|=1$. Standard results give that there is such a $\fnet$ of cardinality $2^{O(t)}$. For every $a_j \in A_T$, let $a^T_j$ denote $a_j$ restricted to the $t$ entries in $T$ and let $w(a_j)$ denote the closest vector in $\fnet$ to $a^T_j/\|a^T_j\|$. By averaging, there is a vector $w \in \fnet$ where at least $d \min\{t/m, (s/m)^t\}2^{-O(t)}$ vectors $a_j \in A_T$ have $w$ as the closest vector to $a^T_j/\|a^T_j\|$. Fix such a $w$ and let $A_{T,w}$ be the subset of columns in $A$ satisfying the conditions.

Now fix $t = (1-o(1))\lg(\eps d/\ell)/\lg(m/s)$. Assume first that for this choice, we have $(s/m)^t \leq t/m$.
Then since $s =o(m)$ (otherwise we are done with the lower bound proof), we have
\[
  |A_{T,w}| \geq d (s/m)^t 2^{-O(t)} = d (s/m)^{(1+o(1))t} \geq \ell/\eps.
\]
From $A_{T,w}$, construct $\eps^{-1}$ disjoint sets of $\ell$ vectors each. For each such set $S$, we have that the vector $\sum_{a_j \in S}e_j/\sqrt{\ell}$ is in $X$. Let $v_1,\dots,v_{\eps^{-1}}$ denote these vectors. Since they have disjoint supports, we have $\langle v_i, v_j\rangle =0$ for $i \neq j$ and thus $\|v_i - v_j\|^2 = 2$. This also implies that $\|Av_i - Av_j\|^2 \in 2 \pm 2\eps$. Since $\|Av_i - Av_j\|^2 = \|Av_i\|^2 + \|Av_j\|^2 - 2\langle Av_i,Av_j\rangle$ and $\|Av_i\|^2, \|Av_j\|^2 \in 1 \pm \eps$, it must be the case that $\langle A v_i, Av_j \rangle \in \pm 2\eps$.

On the other hand, we have $\langle Av_i , A v_j \rangle = \sum_{a_h \in S_i} \sum_{a_k \in S_j} \langle a_h, a_k \rangle/\ell$. Thus $\sum_{a_h \in S_i} \sum_{a_k \in S_j} \langle a_h, a_k \rangle/\ell \leq 2 \eps$. Now set all entries $i \in T$ to $0$ for all columns of $A$. Call the resulting columns $\hat{a}_j$ and the resulting matrix $\hat{A}$. Then for two columns $a_h, a_k$, we have $\langle \hat{a}_h, \hat{a}_k\rangle = \langle a_h ,a_k\rangle - \langle a^T_h, a^T_k \rangle$. For any two $a_h ,a_k \in A_{T,w}$, we have $\langle a^T_h, a^T_k \rangle = \|a^T_h\| \|a^T_k\| \langle w + (a^T_h/\|a^T_h\|-w), w + (a^T_k/\|a^T_k\|-w) \rangle$. We have
\begin{eqnarray*}
  \langle w + (a^T_h/\|a^T_h\|-w), w + (a^T_k/\|a^T_k\|-w) \rangle &=& \\
  \|w\|^2 + \langle w,  (a^T_h/\|a^T_h\|-w) \rangle + \langle w,  (a^T_k/\|a^T_k\|-w) \rangle + \langle (a^T_h/\|a^T_h\|-w), (a^T_k/\|a^T_k\|-w)\rangle &\geq& \\
  \|w\|^2 - \|w\| \|a^T_h/\|a^T_h\|-w \| - \|w\| \|a^T_k/\|a^T_k\|-w \| - \|a^T_h/\|a^T_h\|-w \||a^T_h/\|a^T_h\|-w \|
                                                                   &\geq& \\
  1 - 1/4 - 1/4 - 1/16 &\geq& \\
  1/4.
\end{eqnarray*}
Hence for any two $a_h ,a_k \in A_{T,w}$, it holds that $\langle \hat{a}_h, \hat{a}_k\rangle \leq \langle a_h, a_k\rangle - \|a^T_h\| \|a^T_k\|/4 \leq \langle a_h, a_k\rangle - t(1-\eps)/(8s) \leq \langle a_h, a_k\rangle - t/(16s)$. We therefore conclude that $\langle \hat{A}v_i, \hat{A}v_j \rangle \leq \sum_{a_h \in S_i} \sum_{a_k \in S_j} (\langle a_h, a_k \rangle - t/(16s))/\ell \leq 2 \eps - \ell t/(16 s)$.

Finally, consider the vector $z=\sum_{i =1}^{\eps^{-1}}\hat{A}v_i$. We have $\|z\|^2 = \sum_{i=1}^{\eps^{-1}} \|\hat{A}v_i\|^2 + \sum_{i \neq j} \langle \hat{A}v_i, \hat{A}v_j \rangle \leq \eps^{-1}(1+\eps) + \eps^{-1} (\eps^{-1}-1) (2 \eps - \ell t/(16 s))$. Since $\|z\|^2 \geq 0$, it must thus be the case that $ (\eps^{-1}-1) \ell t/(16 s) \leq 1 + \eps + (\eps^{-1}-1)2 \eps$. Since $\eps^{-1}-1 \geq \eps^{-1}/2$ and $1 + \eps + (\eps^{-1}-1)2 \eps \leq 4$, we conclude $s \geq \eps^{-1} \ell t/128$. Since $d \geq m = \Omega(\eps^{-2} \lg n)$, we have $\lg(ed/\ell) \leq c'\lg(\eps d/\ell)$ for a constant $c'>0$. Thus
\[
  s = \Omega(\eps^{-1} \lg n/\lg(m/s)) = \Omega(\eps^{-1} \lg n/\lg(m/\lg n)).
\]
This was only under the assumption that $(s/m)^t \leq t/m$ for $t=(1-o(1)) \lg(\eps d/\ell)/\lg(m/s)$. This is implied by $(\ell/(\eps d))^{1-o(1)} \leq (1-o(1)) \lg(\eps d/\ell)/(m \lg(m/s))$. This is in particular implied by $m \leq (\eps d/\ell)^{1-o(1)}$. Constraining $m \leq (\eps d/\lg n)^{1-o(1)}$ thus completes the proof.

\begin{proof}[Proof of Lemma~\ref{lem:manyheavy}]
First consider the case where $v$ has at least one coordinate $j$ with $v_j^2 \geq t\|v\|^2/(2s)$. In this case, there are at least $\binom{m-1}{t-1}$ valid choices for $T$.

If all coordinates $j$ satisfy $v_j^2 < t \|v\|^2/(2s)$, we partition the coordinates of $v$ into buckets based on their magnitude. Concretely, for every $i=0,\dots,\lg t-1$, let $V_i$ denote the subset of coordinates $j$ for which $v_j^2 \in [\|v\|^22^{i-1}/s,\|v\|^2 2^{i}/s)$. Notice that all coordinates of $j$ with $v_j^2 < \|v\|^2/(2s)$ contribute at most $s \|v\|^2/(2s) = \|v\|^2/2$ to $\|v\|^2$. Furthermore, the contribution from coordinates $j$ with $j \in V_i$ for a $V_i$ with $|V_i| \leq s/(4t)$, is no more than $\sum_{i=0}^{\lg t-1} (s/(4t))\|v\|^2 2^i/s \leq \|v\|^2/4$. Hence $\sum_{i : |V_i|>s/(4t)} \sum_{j \in V_i} v_j^2 \geq \|v\|^2/4$. This implies that we also have $\sum_{i : |V_i|>s/(4t)} |V_i| \cdot \|v\|^22^i/s \geq \|v\|^2/4$.

For each $i$ with $|V_i| > s/(4t)$, let $t_i := \lceil 4t|V_i|/s \rceil$. Then $t_i \leq 4t|V_i|/s + 1 \leq 4t|V_i|/s + 4t|V_i|/s \leq 8t|V_i|/s$.
Consider all sets $T$ having $|T \cap V_i|=t_i$ for all $i$ with $|V_i| > s/(4t)$. Any such $T$ satisfies $\sum_{j \in T} v_j^2 \geq \sum_{i : |V_i|>s/(4t)}t_i\|v\|^2 2^{i-1}/s \geq  (2t/s)\sum_{i :
 |V_i|>s/(4t)}|V_i| \cdot \|v\|^2 2^{i}/s  \geq (t/(2s)) \|v\|^2$. The number of such $T$ is at least $\binom{m/2}{t-\sum_i t_i}\prod_{i : |V_i|>s/(4t)} \binom{|V_i|}{t_i}$. For $|V_i| > s/(4t)$, we have $\binom{|V_i|}{t_i} \geq (|V_i|/t_i)^{t_i} \geq (|V_i|/(8t|V_i|/s))^{t_i} = (s/(8t))^{t_i}$. The number of valid $T$ is thus at least $\binom{m-s}{t-\sum_i t_i} \prod_{i : |V_i|>s/(4t)} (s/(8t))^{t_i} \geq (m/(2t))^{t - \sum_i t_i} (s/(8t))^{\sum_i t_i} \geq (s/(8t))^t$.
\end{proof}

\section{Subspace Embeddings}
In this section, we show that for any $k$-dimensional subspace $V \subset \R^d$, an embedding matrix $A$ sampled as in Kane and Nelson~\cite{KN14}, with a sparsity $s = \Theta(\eps^{-1}(k/\lg(1/\eps) + k^{2/3} \lg^{1/3} k))$ as in Theorem~\ref{th:subspacesMain}, preserves the norm of every vector in $V$ to within $1 \pm \eps$ with high probability, thus proving Theorem~\ref{th:subspacesMain}.

To simplify the proof, we will once again argue that norms are preserved to within $1 \pm O(\eps)$. As in Section~\ref{sec:upper}, simple rescaling of $\eps$ by a constant factor implies the result.

We first show that it is enough that $A$ approximately preserves norms of a finite set defined by a $1/2$-net on the subspace. The following lemma is known and appears in previous works. For sake of completeness, we supply a proof, which is deferred to the appendix, Section~\ref{sec:net}.
\begin{lemma}
\label{lem:halfnet}
Let $A$ be a matrix and $V$ a subspace of $\R^d$. Let $\enet$ be a $1/2$-net for $V$ and $\enet^+ = \{x + y : x,y \in \enet \cup \{0\}\}$.  Assume that for all $v \in \enet^+$, $\|Av\|^2 \in (1\pm O(\varepsilon))\|v\|^2$, then for all unit vectors $x \in V$, $\|Ax\|^2 \in (1\pm O(\varepsilon))\|x\|^2$.
\end{lemma}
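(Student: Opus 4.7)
The plan is a classical net-expansion argument combined with polarization, which exploits the defining feature of $\enet^+$: that it contains every net point and every pairwise sum of two net points.

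First I would iteratively decompose any unit vector $x \in V$ as a convergent series $x = \sum_{i \ge 0} 2^{-i} w_i$ with $w_i \in \enet$ and $\|w_i\| \le 1$. Setting $r_{-1} := x$, at step $i \ge 0$ take $w_i \in \enet$ to be a net point approximating $r_{i-1}$ to within $1/2$ in norm, and set $r_i := 2(r_{i-1} - w_i) \in V$; by induction $\|r_i\| \le 1$, so the construction is valid at every step, and telescoping gives the series. Absolute convergence is immediate since $\sum 2^{-i} \|w_i\| \le 2$, which also justifies interchanging $A$ with the sum below.

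Second, by linearity,
$$\|Ax\|^2 = \Bigl\|\sum_{i \ge 0} 2^{-i} Aw_i\Bigr\|^2 = \sum_{i,j \ge 0} 2^{-(i+j)} \langle A w_i, A w_j\rangle.$$
I would treat each inner product by the polarization identity
$$\langle A w_i, A w_j\rangle = \tfrac{1}{2}\bigl(\|A(w_i+w_j)\|^2 - \|A w_i\|^2 - \|A w_j\|^2\bigr)$$
and apply the hypothesis to each of the three norms: $w_i, w_j \in \enet \subseteq \enet^+$ and $w_i + w_j \in \enet^+$ by construction. Since $\|w_i\|, \|w_j\| \le 1$ (so $\|w_i + w_j\| \le 2$), the assumption $\|A v\|^2 \in (1 \pm O(\eps))\|v\|^2$ on $\enet^+$ yields
$$\langle A w_i, A w_j\rangle = \langle w_i, w_j\rangle \pm O(\eps)\bigl(\|w_i\|^2 + \|w_j\|^2 + \|w_i + w_j\|^2\bigr) = \langle w_i, w_j\rangle \pm O(\eps).$$

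Third, I would substitute back into the double sum. The leading term telescopes to
$$\sum_{i,j \ge 0} 2^{-(i+j)} \langle w_i, w_j\rangle = \Bigl\langle \sum_i 2^{-i} w_i, \sum_j 2^{-j} w_j \Bigr\rangle = \|x\|^2 = 1,$$
while the accumulated error is at most $O(\eps) \sum_{i,j \ge 0} 2^{-(i+j)} = O(\eps)$, giving $\|Ax\|^2 = 1 \pm O(\eps)$ as required. There is no substantive obstacle here; the only technical care needed is in the iterative construction (checking $\|w_i\|, \|r_i\| \le 1$ so that $\enet$ can always supply the next approximation) and in the convergence argument permitting termwise expansion. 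The structural point being used is exactly that $\enet^+$ is closed under taking pairwise sums, which is precisely what the polarization identity needs.
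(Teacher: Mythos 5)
Your proposal is correct and follows essentially the same route as the paper's proof: the same iterative net decomposition $x=\sum_i \alpha_i x_i$ with geometrically decaying coefficients, the same polarization identity applied to $w_i$, $w_j$, $w_i+w_j \in \enet^+$, and the same geometric-series accounting of the accumulated $O(\eps)$ error. The only cosmetic difference is that you rescale the residual to keep it in the unit ball while the paper normalizes it to a unit vector, which does not change the argument.
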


As explained in the technical overview, we also employ Lemma~\ref{lem:subspacecover}. The lemma gives a combinatorial property of subspaces of $\mathbb{R}^d$.
\begin{customlem}{\ref{lem:subspacecover}}
  Let $V$ be a $k$-dimensional subspace of $\R^d$. For every $\ell \ge 1$, there is a set $S \subseteq [d]$ of coordinates with $|S| \leq k \ell$ such that for every unit vector $v \in V$, all coordinates $i \in [d] \setminus S$ satisfy $|v_i| < 1/\sqrt{\ell}$.
\end{customlem}

\begin{proof}
Let $v^1,\dots,v^k$ be an orthonormal basis for $V$. Consider any unit vector $u \in V$ and write it as $u = \sum_j \alpha_j v^j$ with $\sum_j \alpha_j^2 =1$. Then $u_i = \sum_j \alpha_j v^j_i$. By Cauchy-Schwarz, we have $|u_i| \leq \sqrt{\sum_j \alpha_j^2} \cdot \sqrt{\sum_j (v^j_i)^2} = \sqrt{\sum_j (v^j_i)^2}$. Now let $S \subseteq [d]$ be all coordinates such that there is a unit vector $u \in V$ with $|u_i| \geq 1/\sqrt{\ell}$. Then for all $i \in S$, we must have $\sum_j (v^j_i)^2 \geq 1/\ell$. But $\sum_j \sum_i (v^j_i)^2 = k$ and thus $|S| \leq k \ell$.
\end{proof}

With the two lemmas above, we are ready to prove our main result on subspace embeddings, captured in Theorem~\ref{th:subspacesMain}. Similarly to the proof of Theorem~\ref{th:upperBoundMain}, we define the following notation.

\begin{notation}
Let $x \in \mathbb{R}^{d}$. For every $\ell \in [d]$ denote by $x_{heavy(\ell)}$ the vector obtained from $x$ where all but the entries of magnitude strictly greater than $1/\sqrt{\ell}$ are zeroed out. Denote $x_{light(\ell)} = x - x_{heavy(\ell)}$.
\end{notation}

Let $\enet$ be a $1/2$-net on the unit ball in $V$, and define $\enet^+ = \enet \cup \{x+y : x,y \in \enet \cup \{0\}\}$. A $1/2$-net can be constructed such that $|\enet| \le 4^k$. Let $n = |\enet^+| \le 8^k$.
Let $\ell = \left\lceil \min\left\{\eps^{-1/2}, \left(\frac{\lg n}{\lg k}\right)^{2/3} \right\} \right\rceil$ be an integer, and let $S$ be defined as in Lemma~\ref{lem:subspacecover}. We define $\Y$ as the set of all vectors $y \in \R^d$ such that $\supp(y) \subseteq S$, $|\supp(y)| \le \ell$ and $\|y\| \le 1$. 

Define ${\cal E}_1$ to be the set of all matrices $A \in \mathbb{R}^{m\times d}$ such that for all $x \in \Y$, $\|Ax\|^2 \in (1 \pm \varepsilon)\|x\|^2$.
Define ${\cal E}_2$ to be the set of all matrices $A \in \mathbb{R}^{m\times d}$ such that for all $x \in \enet^+$, $\|Ax_{light(\ell)}\|^2 \in \|x_{light(\ell)}\|^2 \pm \varepsilon$.
Define ${\cal E}_3$ to be the set of all matrices $A \in \mathbb{R}^{m\times d}$ such that for all $x \in \enet^+$, $\left|\dotp{Ax_{heavy(\ell)}}{Ax_{light(\ell)}}\right| < \varepsilon$.

\begin{claim}
Assume $A \in {\cal E}_1 \cap {\cal E}_2 \cap {\cal E}_3$. Then for every unit vector $x \in V$, $\|Ax\|^2 \in (1 \pm O(\varepsilon))$.
\end{claim}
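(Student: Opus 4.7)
My plan is to prove the claim in two stages. First, I will establish $\|Aw\|^2 \in (1 \pm O(\eps))\|w\|^2$ for every $w \in \enet^+$, mirroring the analogous combination argument in Section~\ref{sec:upper}. Second, I will invoke Lemma~\ref{lem:halfnet}, which promotes an $(1 \pm O(\eps))$-norm preservation on the sumset $\enet^+$ to every unit vector $x$ in the subspace $V$.

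For the first stage, decompose $w = w_{heavy(\ell)} + w_{light(\ell)}$. Since the two summands have disjoint supports,
\[
\|Aw\|^2 = \|Aw_{heavy(\ell)}\|^2 + \|Aw_{light(\ell)}\|^2 + 2\dotp{Aw_{heavy(\ell)}}{Aw_{light(\ell)}},
\]
and likewise $\|w\|^2 = \|w_{heavy(\ell)}\|^2 + \|w_{light(\ell)}\|^2$. The event ${\cal E}_2$ directly controls the light term by $\|Aw_{light(\ell)}\|^2 \in \|w_{light(\ell)}\|^2 \pm \eps$, and ${\cal E}_3$ gives $|\dotp{Aw_{heavy(\ell)}}{Aw_{light(\ell)}}| < \eps$. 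The heavy term is handled by ${\cal E}_1$ once we verify that (a suitable scalar multiple of) $w_{heavy(\ell)}$ belongs to $\Y$. The structural fact is supplied by Lemma~\ref{lem:subspacecover} applied to the unit vector $w/\|w\| \in V$: every coordinate $i \notin S$ satisfies $|w_i| < \|w\|/\sqrt{\ell}$, so when $\|w\| \leq 1$ the heavy support automatically lies in $S$. The size bound $|\supp(w_{heavy(\ell)})| \le \ell$ follows from a squared-norm count $|\supp(w_{heavy(\ell)})|/\ell \leq \|w\|^2 \leq 1$. By linearity of $A$, the event ${\cal E}_1$ extends from $\Y$ to the entire cone of $S$-supported $\ell$-sparse vectors, which covers $w_{heavy(\ell)}$ for every $w \in \enet^+$ after a harmless constant rescaling that accounts for the fact that $\|w\| \leq 2$ on $\enet^+$. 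Thus $\|Aw_{heavy(\ell)}\|^2 \in (1 \pm \eps)\|w_{heavy(\ell)}\|^2$.

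Adding the three estimates yields
\[
\|Aw\|^2 \leq (1+\eps)\|w_{heavy(\ell)}\|^2 + \|w_{light(\ell)}\|^2 + \eps + 2\eps \leq \|w\|^2 + O(\eps),
\]
and symmetrically $\|Aw\|^2 \geq \|w\|^2 - O(\eps)$; since $\|w\|^2 = \Theta(1)$ for $w \in \enet^+ \setminus\{0\}$ (and $w=0$ is trivial), the additive $O(\eps)$ slack is the desired multiplicative $(1 \pm O(\eps))$ factor. Invoking Lemma~\ref{lem:halfnet} then promotes the bound to every unit vector in $V$. The main obstacle I foresee is the bookkeeping around $\enet^+$ elements with norm exceeding $1$: for such $w$, the heavy coordinates might naively escape $S$ or exceed $\ell$ in number, so the application of ${\cal E}_1$ must go through either a rescaling of $w_{heavy(\ell)}$ followed by linearity, or a splitting of $w_{heavy(\ell)}$ into $O(1)$ pieces each lying in $\Y$. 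Once that technicality is pinned down, the remainder is a direct combination of the three events and a citation of Lemma~\ref{lem:halfnet}.
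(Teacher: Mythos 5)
Your proposal follows essentially the same route as the paper: decompose each net point into $w_{heavy(\ell)}+w_{light(\ell)}$, control the three resulting terms by ${\cal E}_1,{\cal E}_2,{\cal E}_3$ respectively, and finish with Lemma~\ref{lem:halfnet}. The one point you flag but do not resolve --- elements of $\enet^+$ with $1<\|w\|\le 2$ --- is handled in the paper by invoking Lemma~\ref{lem:halfnet} \emph{first}: by linearity it suffices to treat the unit vectors $x=z/\|z\|$ for $z\in\enet^+$, and for a unit vector Lemma~\ref{lem:subspacecover} gives $\supp(x_{heavy(\ell)})\subseteq S$ and $|\supp(x_{heavy(\ell)})|\le\ell$ directly, so $x_{heavy(\ell)}\in\Y$ with no rescaling needed. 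Be aware that your two proposed fixes do not quite close the gap: rescaling $w_{heavy(\ell)}$ or splitting it into $O(1)$ pieces repairs the constraints $\|y\|\le 1$ and $|\supp(y)|\le\ell$ in the definition of $\Y$, but not the constraint $\supp(y)\subseteq S$, since for $\|w\|>1$ a coordinate outside $S$ may have magnitude in $(1/\sqrt{\ell},\|w\|/\sqrt{\ell}]$ and hence be classified as heavy. Normalizing before taking the heavy/light split is the clean way out (with the understanding that ${\cal E}_2$ and ${\cal E}_3$ are then read as statements about the normalized net points, which costs only constant factors since $\|z\|\le 2$).
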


\begin{proof}
Following Lemma~\ref{lem:halfnet} and using linearity of $A$, it is enough to prove the claim for $x = z/\|z\|$ for all vectors $z$ in $\enet^+$. Let therefore $x$ be any such unit vector. Then $\|Ax\|^2 = \|Ax_{heavy(\ell)}\|^2 + \|Ax_{light(\ell)}\|^2 + 2 \dotp{Ax_{heavy(\ell)}}{Ax_{light(\ell)}}$. Since $\|x\|=1$ and every entry of $x_{heavy(\ell)}$ is at least of magnitude $1/\sqrt{\ell}$, we have by the definition of $S$ that $\supp(x_{heavy(\ell)}) \subseteq S$ and $|\supp(x_{heavy(\ell)})| \le \ell$. Therefore $x_{heavy(\ell)} \in \Y$ and thus
\[
\|Ax\|^2 \le (1+\varepsilon)\|x_{heavy(\ell)}\|^2 + (1 + \eps)\|x_{light(\ell)}\|^2 + \varepsilon + 2 \varepsilon \le (\|x\|^2+O(\varepsilon))
\]
Similarly
\[
\|Ax\|^2 \ge (1-\varepsilon)\|x_{heavy(\ell)}\|^2 + (1-\eps)\|x_{light(\ell)}\|^2 -\varepsilon - 2 \varepsilon \ge (\|x\|^2-O(\varepsilon))
\]
\end{proof}
As in the proof of Theorem~\ref{th:upperBoundMain}, it remains to lower bound the probability of ${\cal E}_1 \cap {\cal E}_2 \cap {\cal E}_3$. Once again, we bound the probability of each event separately.
\begin{claim}
$\Pr[A \in {\cal E}_1] \ge 1 - 2^{-k^{2/3}}$.
\end{claim}

\begin{proof}
Denote $\delta = 2^{-\Omega(\sqrt[3]{\lg^2 n \cdot \lg k})}$. We get that $m \ge \Omega(\varepsilon^{-2}\lg (1/\delta))$ and $s \ge \Omega(\varepsilon^{-1}\lg (1/\delta))$. Following the result by Kane and Nelson \cite{KN14}, for every unit vector $x \in \mathbb{R}^d$, we have that $\Pr\left[\|Ax\|^2 \in (1\pm O(\varepsilon))\right] \ge 1 - \delta$. 

Next, for every $T \subseteq S$ such that $|T|=\ell$, let $\Y_T = \{y \in \R^d : \|y\| \le 1 \;\; and \;\; \supp(y)\subseteq T\}$, then $\Y_T$ is a unit ball of an $\ell$-dimensional subspace of $\R^d$, and thus there is a $1/2$-net $\hat{\Y}_T$ for $\Y_T$ such that $|\hat{\Y}_T| \le 4^{\ell}$. Note that in these notations $\Y = \bigcup_{T \in \binom{S}{\ell}}\Y_T$, and denote in addition $\hat{\Y}= \bigcup_{T \in \binom{S}{\ell}}\hat{\Y}_T$. Then
\[
|\hat{\Y}| \le \binom{|S|}{\ell}4^{\ell} \le (4ek)^{\ell} = 2^{\Omega(\ell \lg(4ek))}\;.
\]
For $k>1$, we have $|\hat{\Y}| \le 2^{\Omega(\ell \lg k)} = 2^{\Omega(\sqrt[3]{\lg^2 n\lg k})} = \delta^{-1/2}$.
Therefore with probability at least $1 - \sqrt{\delta} \ge 1 - 2^{-k^{2/3}}$, we get that for all $y \in \hat{\Y}$, $\|Ay\|^2 \in  (1\pm O(\varepsilon))$.

Assume therefore that for all $y \in \hat{\Y}$, $\|Ay\|^2 \in  (1\pm O(\varepsilon))$. Let $x \in \Y$, then there exists $T \subseteq S$ such that $|T|=\ell$ and $\supp(x) \subseteq T$, hence $x \in \Y_T$. As $\hat{\Y}_T$ is a $1/2$-net of $\Y_T$ and $\Y_T$ is a unit ball of an $\ell$-dimensional subspace of $\R^d$, Lemma~\ref{lem:halfnet} implies that $\|Ax\|^2 \in (1 \pm O(\varepsilon))\|x\|^2$. Therefore $\Pr[A \in {\cal E}_1] \ge 1 - 2^{-k^{2/3}}$.
\end{proof}

The following claim completes the proof of Theorem~\ref{th:subspacesMain}. Proving bounds on the probabilities of ${\cal E}_2$ and ${\cal E}_3$ is analogous to the proofs of Claims~\ref{c:tails} and \ref{c:cross} respectively, and is therefore omitted.

\begin{claim}
$\Pr[A \in {\cal E}_2] \ge 1 - \frac{1}{n}$ and $\Pr[A \in {\cal E}_3] \ge 1 - \frac{1}{n}$.
\end{claim}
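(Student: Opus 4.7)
The plan is to mirror the proofs of Claims~\ref{c:tails} and~\ref{c:cross} from Section~\ref{sec:upper}, under the substitutions $\enet^+ \leftrightarrow X$, $x_{light(\ell)} \leftrightarrow x_{tail(\ell)}$, and $x_{heavy(\ell)} \leftrightarrow x_{head(\ell)}$. Throughout, every appearance of $\lg n$ in the original proofs now means $\lg|\enet^+| = O(k)$, and every appearance of $\lg d$ becomes $\lg k$---this is precisely what dictates the choice $\ell = \lceil \min\{\varepsilon^{-1/2}, (\lg n/\lg k)^{2/3}\}\rceil$. Since the sparsity bound of Theorem~\ref{th:subspacesMain} is exactly that of Theorem~\ref{th:upperBoundMain} with $d$ replaced by $k$, every numerical constraint between $m$, $s$ and $\ell$ used in Section~\ref{sec:upper} will carry over.

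For $\mathcal{E}_2$, I would fix $x \in \enet^+$, set $v = x_{light(\ell)}$, and note that by the definition of $x_{heavy(\ell)}$ we have $\|v\|_\infty \le 1/\sqrt{\ell}$, while $\|v\| \le \|x\| \le 2$ because vectors in $\enet^+$ are sums of at most two vectors in the unit ball of $V$. I would then apply Theorem~\ref{thm:meenah} with $\delta = n^{-2}$ and $\hat\varepsilon = \max\{\varepsilon, (\lg n/s)(\|v\|_\infty/\|v\|)^2\}$, verifying the two hypotheses on $m$ by exactly the case analysis in Claim~\ref{c:tails}. The critical final estimate $\lg(n)/(s\ell) = O(\varepsilon)$ uses $\ell \ge (\lg n/\lg k)^{2/3}$ and $s \ge \varepsilon^{-1}\lg^{2/3}n\lg^{1/3}k$. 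A union bound over the at most $n$ vectors of $\enet^+$ will give $\Pr[A \in \mathcal{E}_2] \ge 1-1/n$.

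For $\mathcal{E}_3$, I would reuse the entire machinery of Claims~\ref{c:oftenwell}, \ref{lem:nice}, \ref{c:crossLargeTerms}, and~\ref{c:crossSmallTerms}. First I would observe that $|\supp(x_{heavy(\ell)})| \le 4\ell$ for every $x \in \enet^+$, since $\|x\| \le 2$ and every nonzero coordinate of $x_{heavy(\ell)}$ has magnitude at least $1/\sqrt{\ell}$; up to the constant $4$, this lets me apply Claim~\ref{c:oftenwell} to $J = \supp(x_{heavy(\ell)})$ and obtain well-behavedness with probability $1-n^{-3}$. Claim~\ref{lem:nice} would then give $\|Ax_{heavy(\ell)}\|_\infty \le O(\sqrt{\ell/s})$ and at most $O(\lg n/\lg(1/\varepsilon))$ entries of magnitude exceeding $O(\sqrt{1/s})$, after which the decomposition of $\langle Ax_{heavy(\ell)}, Ax_{light(\ell)}\rangle$ into the large-coordinate region $R$ and its complement $S$ used in Claims~\ref{c:crossLargeTerms} and~\ref{c:crossSmallTerms} applies verbatim, yielding failure probability $O(1/n^2)$ for each fixed $x$. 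A final union bound over $\enet^+$ gives $\Pr[A \in \mathcal{E}_3] \ge 1 - O(1/n)$.

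The main obstacle is verifying that all numerical constraints from Section~\ref{sec:upper} survive the substitutions---in particular $s \le \varepsilon m$, $\ell \le \varepsilon^{-1/2}$, the estimate $\lg(n)/(s\ell) = O(\varepsilon)$ used in the tails, and the bounds on $\mu$ and $|R|$ in Claims~\ref{c:crossLargeTerms} and~\ref{c:crossSmallTerms}. Because $\lg n = \Theta(k)$ and the new sparsity $s = \Theta(\varepsilon^{-1}(k/\lg(1/\varepsilon) + k^{2/3}\lg^{1/3}k))$ coincides termwise with the old bound under $\lg n \mapsto \Theta(k)$ and $\lg d \mapsto \lg k$, this is essentially bookkeeping rather than new technical content, which is why the authors omit the calculation.
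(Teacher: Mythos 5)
Your proposal is correct and is precisely the argument the paper intends: the paper omits this proof entirely, stating only that it is analogous to Claims~\ref{c:tails} and~\ref{c:cross}, and your substitutions ($\lg n \mapsto \Theta(k)$, $\lg d \mapsto \lg k$, $\|x\|\le 2$ for $x\in\enet^+$, $|\supp(x_{heavy(\ell)})|\le 4\ell$) are exactly the required bookkeeping. The one point worth making explicit is that the subspace section's ${\cal E}_3$ lacks the ``$\|Ax_{heavy(\ell)}\|^2>2$ or'' escape clause present in Section~\ref{sec:upper}, so a verbatim port of Lemma~\ref{l:smallCrossCondWell} only yields the disjunctive event; you must additionally bound $\Pr[\|Ax_{heavy(\ell)}\|^2\ge 2\|x_{heavy(\ell)}\|^2]\le n^{-2}$ per vector (a constant-distortion Kane--Nelson bound suffices, since $s=\Omega(\lg n)$ here) or restore the disjunct in the definition and discharge it via ${\cal E}_1$.
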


\bibliographystyle{alphaurlinit}
\bibliography{sparserJLBib}

\appendix	

\section{Proofs for Observations~\ref{obs:withWithoutReplacement} and \ref{obs:withWithoutReplacementCoeff}}
For sake of completeness, we prove the following lemma, which implies Observations~\ref{obs:withWithoutReplacement} and \ref{obs:withWithoutReplacementCoeff}.

\begin{lemma}
Let $I \subseteq [m]\times[d]$ and let $\{c_{(i,j)}\}_{(i,j)\in I}$ be a set of non-negative constants. For every $(i,j)\in I$, define $b_{ij}$ as the Bernoulli random variable attaining $1$ if and only if $a_{ij}\ne 0$, then 
\[
\mathbb{E}\left[\exp\left(\sum_{(i,j)\in I}c_{(i,j)}b_{ij}\right)\right] \le \prod_{(i,j)\in I}\mathbb{E}[\exp(c_{(i,j)}b_{ij})]
\]
\end{lemma}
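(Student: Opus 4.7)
The plan is to exploit the independence structure of the Kane--Nelson construction at two levels and reduce the claim to a single-block, single-column statement. First I would observe that distinct columns of $A$ are sampled independently, so for each column $j$ the indicators $\{b_{ij}\}_i$ are independent of $\{b_{ij'}\}_i$ for $j' \ne j$. Second, within a single column the $s$ blocks of $m/s$ consecutive rows are sampled independently, and inside each block exactly one row index is selected uniformly at random to carry a non-zero. Both sides of the desired inequality therefore factor cleanly: the left-hand side factors into a product over columns, and each column factor further factors into a product over blocks, and the right-hand side trivially factors the same way. It thus suffices to prove the inequality when $I$ is intersected with a single block $B$ of a single column $j$.

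For such a block I would set $I_B = \{i \in B : (i,j) \in I\}$ and abbreviate $c_i := c_{(i,j)}$. Since exactly one $i^\ast \in B$ has $b_{i^\ast j} = 1$, with $i^\ast$ uniform on $B$, a direct computation gives
\[
\mathbb{E}\!\left[\exp\!\left(\sum_{i \in I_B} c_i\, b_{ij}\right)\right] \;=\; 1 \;+\; \frac{1}{|B|}\sum_{i \in I_B}\bigl(e^{c_i}-1\bigr),
\]
while each single-variable factor satisfies $\mathbb{E}[\exp(c_i b_{ij})] = 1 + (e^{c_i}-1)/|B|$. Substituting $x_i := (e^{c_i}-1)/|B|$, which is non-negative precisely because $c_i \ge 0$, the per-block claim reduces to
\[
1 + \sum_{i \in I_B} x_i \;\le\; \prod_{i \in I_B}(1+x_i),
\]
which is immediate from expanding the product, since all cross-terms are non-negative.

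Finally I would multiply the per-block bounds across the $s$ blocks in column $j$ (using block-wise independence) to obtain the per-column inequality $\mathbb{E}[\exp(\sum_{i \in I_j} c_{(i,j)} b_{ij})] \le \prod_{i \in I_j} \mathbb{E}[\exp(c_{(i,j)} b_{ij})]$, and then multiply across columns (using column-wise independence) to recover the statement of the lemma. I do not expect any real obstacle here; the only substantive point is the single-block step, and the only hypothesis that is genuinely used is the non-negativity of the $c_{(i,j)}$'s, which is exactly what makes the elementary product-versus-sum inequality go through.
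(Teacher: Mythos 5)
Your proposal is correct and matches the paper's proof essentially step for step: both factor over columns and over the independent blocks within a column, compute the per-block moment generating function exactly as $1 + \frac{1}{m/s}\sum_{i}(e^{c_i}-1)$ using the fact that exactly one row per block is non-zero, and conclude via the elementary inequality $1+\sum_i x_i \le \prod_i(1+x_i)$ for non-negative $x_i$, which is where the non-negativity of the $c_{(i,j)}$'s enters. No differences worth noting.
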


\begin{proof}
Recall that the rows of $A$ are divided into $s$ blocks $I_1,\ldots,I_s$ of $m/s$ consecutive rows each. That is for every $p \in [s]$, $I_p = [(p-1)(m/s)+1, pm/s]$. In these notations, 
\[
\sum_{(i,j)\in I}c_{(i,j)}b_{ij} = \sum_{j \in [d]}\sum_{p\in S}\sum_{i \in I_p : (i,j)\in I}c_{(i,j)}b_{ij} \;.
\]
As the columns of $A$, as well as different blocks within each column are independent, we get that
\[
\mathbb{E}\left[\exp\left(\sum_{(i,j)\in I}c_{(i,j)}b_{ij}\right)\right] \le \prod_{j \in [d]}\prod_{p \in [s]}\mathbb{E}\left[\prod_{i \in I_p : (i,j)\in I}\exp(c_{(i,j)}b_{ij})\right]
\]
Fix $j \in [d]$ and $p \in [s]$, and denote $C = \{i \in I_p : (i,j)\in I\}$. For every $i \in C$,  $c_{(i,j)} \ge 0$, and thus $\frac{e^{c_{i,j}}-1}{(m/s)} \ge 0$. Therefore
\begin{equation*}
\begin{split}
\mathbb{E}\left[\prod_{i \in C}\exp(c_{(i,j)}b_{ij})\right] &= \sum_{i \in C}\frac{e^{c_{(i,j)}}}{(m/s)} + (1 - \frac{|C|}{(m/s)}) = 1 + \sum_{i \in C}\frac{e^{c_{(i,j)}} - 1}{(m/s)} \\
&\le \prod_{i \in C} \left(1 + \frac{e^{c_{(i,j)}}-1}{(m/s)}\right) = \prod_{i\in C} \mathbb{E}(\exp(c_{(i,j)}b_{ij}))
\end{split}
\end{equation*}
\end{proof}

\section{A $1/2$-net suffices}
\label{sec:net}
Here we give the defered proof of Lemma~\ref{lem:halfnet}

\begin{proof}[Proof of Lemma~\ref{lem:halfnet}]
Let $x \in V$ be a unit vector. We construct inductively a sequence $\{x_i\}_{i=0}^\infty$ of vectors in $\enet$ and a sequence $\{\alpha_i\}_{i=0}^\infty$ of non-negative real numbers such that $x = \sum_{i=0}^\infty{\alpha_ix_i}$ and moreover $\alpha_i \le 2^{-i}$ for all $i \ge 0$.
Let $x_0$ be the closest vector to $x$ in $\enet$, and let $\alpha_0=1$. Then $x = \alpha_0x_0 + (x-\alpha_0x_0)$. Clearly if $x-\alpha_0x_0 = 0$ we are done, as we can define $\alpha_i =0$ for all $i \ge 1$. Otherwise, denote $\alpha_1 = \|x - \alpha_0x_0\|$ and $v_1 = \alpha_1^{-1}(x-\alpha_0x_0)$, then $\alpha_1 \le 1/2$, $v_1$ is a unit vector and $x = \alpha_0x_0 + \alpha_1v_1$. 
Following by induction let $p \in \mathbb{N}$ and assume there are vectors $x_0,\ldots,x_p \in \enet$, numbers $\alpha_0,\ldots,\alpha_{p+1}$ and a unit vector $v_{p+1}$ such that $x = \sum_{i=0}^p\alpha_ix_i + \alpha_{p+1}v_{p+1}$ and such that $\alpha_i \le 2^{-i}$ for all $i \le p+1$. Let $x_{p+1}$ be the closest vector in $\enet$ to $v_{p+1}$. Then $v_{p+1} = x_{p+1} + (v - x_{p+1})$.
If $v - x_{p+1} = 0$ we are done, as we can define $\alpha_i =0$ for all $i \ge p+2$. Otherwise, denote $\beta = \|v - x_{p+1}\|$, $\alpha_{p+2}  = \alpha_{p+1}\beta$  and $v_{p+2} = \beta^{-1}(v - x_{p+1})$, then $\alpha_{p+2} \le 2^{-p+1}$, $v_{p+2}$ is a unit vector and 
\[
x = \sum_{i=0}^p\alpha_ix_i + \alpha_{p+1}v_{p+1} = \sum_{i=0}^p\alpha_ix_i + \alpha_{p+1}(x_{p+1} + (v - x_{p+1})) = \sum_{i=0}^{p+1}\alpha_ix_i + \alpha_{p+2}v_{p+2} \;.
\]
This completes the construction of the sequences. Next note that 
\[
\|x\|^2 = \left\|\sum_{i=0}^\infty{\alpha_ix_i}\right\|^2 = \sum_{i=0}^\infty{\alpha_i \left\| x_i \right\|^2} + \sum_{i < j}2 \alpha_i\alpha_j x_i^tx_j  \;.
\]
Similarly we get that 
\[
\|Ax\|^2 = \left\|\sum_{i=0}^\infty{\alpha_iAx_i}\right\|^2 = \sum_{i=0}^\infty{\alpha_i \left\| Ax_i \right\|^2} + \sum_{i < j}2\alpha_i\alpha_j x_i^tA^tAx_j  \;.
\]
Since $x_i \in \enet \subseteq \enet^+$ for all $i$ we have that $\left\| Ax_i \right\|^2 \in 1 \pm O(\varepsilon)$. In addition, for all $i<j$, $2x_i^tx_j = \|x_i+x_j\|^2 - \|x_i\|^2 - \|x_j\|^2$. Since $x_i,x_j,x_i+x_j \in \enet^+$ we have that
\[
2x_i^tA^tAx_j = \|Ax_i+Ax_j\|^2 - \|Ax_i\|^2 - \|Ax_j\|^2 = \|A(x_i+x_j)\|^2 - \|Ax_i\|^2 - \|Ax_j\|^2 \in 2x_i^tx_j \pm O(\varepsilon) \;,
\]
and thus
\begin{equation*}
\begin{split}
\|Ax\|^2 &= \sum_{i=0}^\infty{\alpha_i \left\| Ax_i \right\|^2} + \sum_{i < j}2\alpha_i\alpha_j x_i^tA^tAx_j \\
&\in \sum_{i=0}^\infty{\alpha_i (\|x_i\|^2 \pm O(\varepsilon))} + \sum_{i < j}2 \alpha_i\alpha_j (x_i^tx_j \pm O(\varepsilon))\\
&\subseteq \sum_{i=0}^\infty{\alpha_i \left\| x_i \right\|^2} + \sum_{i < j}2 \alpha_i\alpha_j x_i^tx_j + O(\varepsilon)\left(\sum_{i=0}^\infty{\alpha_i} + \sum_{i < j}2 \alpha_i\alpha_j\right) \subseteq 1 \pm O(\varepsilon)
\end{split}
\end{equation*}
\end{proof}

\end{document}